\DeclarePairedDelimiter\ceil{\lceil}{\rceil}
\newtheorem{theorem}{Theorem}[section]
\newtheorem{lemma}[theorem]{Lemma}
\newtheorem{proposition}[theorem]{Proposition}
\newtheorem{definition}[theorem]{Definition}
\title{On the Quantum Chromatic Numbers of Small Graphs}
\author{Olivier Lalonde\\ \\
 DIRO, Universit\'e de Montr\'eal,\\
 2920, chemin de la Tour, Montr\'eal, Qu\'ebec, Canada H3T 1N8\\
 {\tt olivier.lalonde.1@umontreal.ca} \\
}
\date{\today}
\begin{document}
\maketitle

\begin{abstract}
We make two contributions pertaining to the study of the quantum chromatic numbers of small graphs. Firstly, in an elegant paper, Man\v{c}inska and Roberson [\textit{Baltic Journal on Modern Computing}, 4(4), 846-859, 2016] gave an example of a graph $G_{14}$ on 14 vertices with quantum chromatic number 4 and classical chromatic number 5, and conjectured that this is the smallest graph exhibiting a separation between the two parameters. We describe a computer-assisted proof of this conjecture, thereby resolving a longstanding open problem in quantum graph theory. Our second contribution pertains to the study of the rank-$r$ quantum chromatic numbers. While it can now be shown that for every $r$, $\chi_q$ and $\chi^{(r)}_q$ are distinct, few small examples of separations between these parameters are known. We give the smallest known example of such a separation in the form of a graph $G_{21}$ on 21 vertices with $\chi_q(G_{21}) =  \chi^{(2)}_q(G_{21}) = 4$ and $ \xi(G_{21}) = \chi^{(1)}_q(G_{21}) = \chi(G_{21}) = 5$. The previous record was held by a graph $G_{msg}$ on 57 vertices that was first considered in the aforementioned paper of Man\v{c}inska and Roberson and which satisfies $\chi_q(G_{msg}) = 3$ and $\chi^{(1)}_q(G_{msg}) = 4$. In addition, $G_{21}$ provides the first provable separation between the parameters $\chi^{(1)}_q$ and $\chi^{(2)}_q$. We believe that our techniques for constructing $G_{21}$ and lower bounding its orthogonal rank could be of independent interest.

\vspace{0.1in}

\end{abstract}

\section{Introduction}
Let $G$ be a finite simple graph. For a given number of colours $k$, this paper will be centered around the following scenario, which was first studied in full generality by Galliard and Wolf (\cite{GW}) and is called the \textit{$k$-colouring game} on $G$. Two participants, traditionally named Alice and Bob, are physically separated,  respectively given vertices $x$, $y$ of $G$ under the promise that $x$ and $y$ are either equal or adjacent, and then requested to respectively output colours $a,b \in [k]$ that are equal if and only if their inputs were. In a given setting, we are interested in the smallest value of $k$, denoted $k^*$, for which this can be achieved with certainty: clearly, this will also be possible for all $k \geq k^*$. \\

It can be seen that perfect deterministic strategies for the $k$-colouring game on $G$ and classical $k$-colourings of $G$ are equivalent concepts: fixing such a $k$-colouring, Alice and Bob can play the game perfectly by each outputting the colour corresponding to the vertex they were given, and conversely, it is fairly straightforward to see that all perfect deterministic strategies for the game are of this form. We therefore have that $k^* = \chi(G)$ classically. When quantum mechanics is taken into account, the picture changes: as was first shown by Brassard, Cleve and Tapp (\cite{BCT}), though without using our now-standard graph-theoretic terminology, certain graphs have the intriguing property that the sharing of quantum entanglement enables Alice and Bob to play the $k$-colouring game on $G$ perfectly for some values of $k$ that are strictly smaller than $\chi(G)$. Put differently, defining the \textit{quantum chromatic number} of $G$, denoted $\chi_q(G)$, to be the value of $k^*$ in the entangled setting, it can be the case that $\chi_q(G) < \chi(G)$. $G$ will be said to be \textit{quantumly $k$-colourable} if there exists a perfect entangled strategy for the $k$-colouring game on $G$, and we will refer to such a strategy as a \textit{quantum $k$-colouring} of $G$.\\

As could have been inferred from its title, this paper is concerned with the study of the quantum chromatic numbers of small graphs. We make two contributions in this direction: one negative, by giving a tight lower bound on the size of a graph exhibiting a separation between the classical and quantum chromatic numbers, and one positive, by giving an example of a small graph with interesting properties from the standpoint of the study of the quantum chromatic number. Most of our results will be computer-assisted, and most of the corresponding code, which is written in the Julia language, can be found at \url{https://github.com/lalondeo/QuantumColorings}, which itself builds on the Koala library (\url{https://github.com/lalondeo/Koala.jl}). \\

Firstly, while it was first shown in \cite{BCT} that the classical and quantum chromatic numbers are distinct in general (though without giving a concrete example of a graph with this property: their work was made explicit some time later by Galliard, Tapp and Wolf (\cite{GTW})), the smallest graph arising from this line of work that exhibits the desired separation, which is due to Avis, Hasegawa, Kikuchi and Sasaki (\cite{avis}), contains more than a thousand vertices. This means that the corresponding colouring games are not very well suited for an experimental demonstration of the existence of quantum entanglement. In addition to initiating the formal study of the quantum chromatic number, Cameron, Montanaro, Newman, Severini and Winter (\cite{cameron06}) came up with a much smaller separation between the classical and chromatic numbers in the form of a graph $G_{18}$ on 18 vertices with $\chi(G_{18}) = 5$ and $\chi_q(G_{18}) = 4$. The corresponding quantum $4$-colouring of $G_{18}$ is obtained by invoking proposition \ref{quaternion} below together with the fact that $G_{18}$ admits an orthogonal representation in $\mathbb{R}^4$, by construction. This was later improved upon by Man\v{c}inska and Roberson (\cite{mancinska}), who gave a graph $G_{14}$ on 14 vertices which otherwise shares the aforementioned properties of $G_{18}$ and conjectured that this separation is smallest possible. In section 3, we will describe a computer-assisted proof of the following slight strengthening of their conjecture:
\begin{theorem} \label{noseparation}
All graphs $G$ with $\chi_q(G) < \chi(G)$ satisfy one of the following:
\begin{enumerate}
\item $|V(G)| \geq 15$
\item $|V(G)| = 14$ and $\chi_q(G) \geq 4$
\end{enumerate}
\end{theorem}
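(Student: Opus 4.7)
The plan is by minimal counterexample, reducing to a computer-assisted verification on vertex-critical graphs. Suppose $G$ violates the conclusion with as few vertices as possible, so $\chi_q(G)<\chi(G)$ and either $|V(G)|\le 13$ or $|V(G)|=14$ with $\chi_q(G)\le 3$. Deleting any vertex $v$ gives $|V(G-v)|\le 13$ and $\chi_q(G-v)\le\chi_q(G)<\chi(G)$; were $\chi(G-v)=\chi(G)$, then $G-v$ would be a strictly smaller counterexample falling under case 1, contradicting minimality. Hence $\chi(G-v)<\chi(G)$ for every $v$, so $G$ is $\chi(G)$-vertex-critical. The theorem therefore reduces to verifying, for every vertex-critical graph $G$ with $|V(G)|\le 14$ and $\chi(G)\ge 4$, that $\chi_q(G)\ge\chi(G)$ when $|V(G)|\le 13$ and $\chi_q(G)\ge 4$ when $|V(G)|=14$. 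Graphs with $\chi(G)\le 3$ require nothing, since $\chi_q(G)\ge 3$ already follows from $G$ being non-bipartite.

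I would then proceed in two computational stages. First, enumerate every vertex-critical graph of order at most $14$ with chromatic number $\ge 4$ using an isomorphism-free critical-graph generator (for instance \texttt{nauty}'s \texttt{geng} piped through a criticality filter), cross-checked against the catalogues of small critical graphs compiled by Royle, Stiebitz and Goedgebeur; cliques can be discarded a priori because $\chi_q(K_n)=n$. Second, for each surviving $G$ with target threshold $t$ (equal to $\chi(G)$ if $|V(G)|\le 13$ and to $4$ if $|V(G)|=14$), certify $\chi_q(G)\ge t$ via a hierarchy of lower bounds of increasing cost: first cheap SDP relaxations such as $\vartheta^+(\overline{G})$ and its projective-rank strengthenings, all of which lower-bound $\chi_q(G)$; failing that, an NPA-style moment relaxation of the noncommutative polynomial system whose solutions are quantum $(t-1)$-colourings of $G$. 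Infeasibility of the moment relaxation certifies $\chi_q(G)\ge t$, and its dual solution would be rounded and its validity rechecked in exact rational arithmetic.

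The principal obstacle I anticipate is efficiency at this certification stage. Because $G_{14}$ itself realises $\chi_q<\chi$ at exactly $14$ vertices, the cheap SDP bounds must be slack on some critical graphs of nearby order, so the expensive moment relaxation is likely to be needed on a nontrivial subset of the enumeration; controlling its cost will require aggressive exploitation of the automorphism group of each $G$ and of the sparsity of the quantum-colouring ideal, together with careful rational rounding of numerically computed certificates of infeasibility. A secondary, less severe concern is certifying completeness of the vertex-critical enumeration, which I would address by running two independent generation pipelines and cross-checking the output against published databases of small critical graphs.
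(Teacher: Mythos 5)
Your logical reduction is sound, and the overall strategy — reduce to a critical class, enumerate it isomorphism-freely, then certify $\chi_q \geq t$ via increasingly expensive SDP relaxations — is exactly the paper's. But the specific reduction you extract from the minimal-counterexample argument is too weak to make the enumeration feasible, and strengthening it is where the real content lies. Two issues. First, the paper reduces not to vertex-critical graphs but to the properly smaller class of \emph{edge}-critical graphs (deleting any vertex \emph{or} edge drops $\chi$). This is obtained not by minimizing over $|V(G)|$ but by observing directly that any counterexample $G$ contains an edge-critical subgraph $H$ with $\chi(H)=\chi(G)$, and $\chi_q$ is monotone under taking subgraphs, so $H$ is also a counterexample. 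Second and more consequentially, your minimality-by-order argument leaves $\chi(G)$ unconstrained for a 14-vertex minimal counterexample, so you would have to enumerate vertex-$k$-critical graphs on 14 vertices for \emph{every} $k\ge 4$. The paper instead observes that since the only question at $n=14$ is whether $G$ admits a quantum $3$-colouring, any such counterexample contains an edge-$4$-critical subgraph, and if that subgraph has fewer than 14 vertices it is already covered. Hence only the edge-$4$-critical graphs on 14 vertices (about 3.5 million, which already consumed most of roughly one year of CPU time) need be enumerated at order 14; enumerating the $\chi=5$ and higher critical classes there — there are already 8.3 million edge-$5$-critical graphs on just 13 vertices — would almost certainly be out of reach.

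On the certification side your plan is close but can be sharpened in the same spirit: the paper's cheap prefilter is $\xi_{SDP}$ of Paulsen, Severini, Stahlke, Todorov and Winter, which dominates $\overline{\vartheta}^+$; and the expensive step is Russell's \emph{synchronous} NPA hierarchy rather than a generic NPA moment relaxation, exploiting the fact that a winning correlation for a colouring game is necessarily synchronous. Your proposal would eventually certify the same facts, but without restricting to edge-criticality, restricting to $k=4$ at $n=14$, and using the synchronous hierarchy, it would not run to completion in practice.
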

\smallskip
The above result shows that the separation between the quantum and classical chromatic numbers that is given by $G_{14}$ is minimal, both in terms of its size, and among graphs of the same size exhibiting the separation, in terms of its quantum chromatic number. The longstanding open problem of determining the smallest graph exhibiting a separation between the quantum and classical chromatic numbers, which seems to have appeared in print for the first time in the work of \cite{avis} and was also asked by \cite{cameron06}, is thereby completely solved. The idea of the proof of theorem \ref{noseparation} is to exhaustively enumerate a certain class of graphs with the property that if a counterexample to the theorem existed, then that class would also contain a counterexample. This enumeration was achieved using an algorithm based on the \texttt{geng} program of the NAUTY library of McKay (\cite{mckay}), to be described in subsection 3.1. With the resulting list in hand, we then ran every graph $G$ in it through a specialised semidefinite hierarchy due to Russell (\cite{russell}) to attempt to show that the colouring game on $G$ with $\chi(G)-1$ colours has entangled synchronous value strictly less than one, which implies that $G$ is not quantumly $(\chi(G)-1)$-colourable and therefore that $\chi_q(G) = \chi(G)$, as desired.  How this was done is described in subsection 3.2. \\

Our first contribution showed the absence of a separation between the quantum and classical chromatic numbers among a certain class of graphs. By contrast, in section 4, we will provide an example of a small graph $G_{21}$ on 21 vertices exhibiting a more exotic separation between the quantum and classical chromatic numbers that that which is given by $G_{14}$ and $G_{18}$. Prior to this work, all examples of small graphs exhibiting this separation relied on proposition \ref{quaternion}, so that for the corresponding graphs $G$, it is always the case that there exists a quantum $\chi_q(G)$-colouring of $G$ with corresponding measurement operators all of rank one. Equivalently, defining the \textit{rank-$r$ quantum chromatic number} (\cite{cameron06}) of $G$, denoted $\chi^{(r)}_q(G)$, to be the smallest value of $k$ for which $G$ admits a quantum $k$-colouring with all measurement operators of rank exactly $r$ (see theorem \ref{structurecameron} for why this is a useful definition to make), we have that for these graphs, $\chi^{(1)}_q(G) = \chi_q(G)$. \cite{cameron06} asked if this last equation holds for all graphs. Although this question was seemingly never addressed directly in the subsequent literature, it can now be established that this is not the case, and more strongly, recent results of Harris (\cite{harris}) can be seen to imply that for every $r$, there exists a graph $G$ with $\chi_q(G) = 3$ and $\chi^{(r)}_q(G) > 3$. Indeed, as was realised by \cite{cameron06}, the parameter $\chi^{(r)}_q$ is computable in principle, while \cite{harris} showed that the problem of determining whether a given graph is quantumly 3-colourable is undecidable in general, by appealing to a result of Slofstra (\cite{slofstra}). The work of \cite{harris} can be leveraged in principle to produce an explicit separation between $\chi_q$ and $\chi^{(r)}_q$ for any given choice of $r$, but the resulting graphs will most likely be formidably large. Prior to this work, the only reasonably small graph known to exhibit a separation of this kind came from the work of \cite{mancinska} and is a graph $G_{msg}$ on 57 vertices with $\chi_q(G_{msg}) = 3$ and $\chi^{(1)}_q(G_{msg}) = 4$. $G_{msg}$ is obtained by applying Karp's classical reduction from 3-SAT to 3-COL to the system of equations defining the magic square game of Mermin (\cite{mermin}) and Peres (\cite{peres}), and the fact that $\chi^{(1)}_q(G_{msg}) = 4$ follows from the fact that $\xi(G_{msg}) = \chi(G_{msg}) = 4$, as shown by \cite{mancinska}, combined with the first part of proposition \ref{quaternion}. The measurement operators in the quantum 3-colouring of $G_{msg}$ corresponding to the standard perfect quantum strategy for the magic square game are not all of the same rank, but by using the averaging trick of \cite{cameron06}, they can all be made to be of rank 4, so that $\chi^{(4)}_q(G_{msg}) = 3$. It may be that $G_{msg}$ admits a rank-3 (or even rank-2) quantum 3-colouring, but we do not know of one. In addition to being much smaller than $G_{msg}$, our graph $G_{21}$ could be shown to satisfy $\chi_q(G_{21}) = \chi^{(2)}_q(G_{21}) = 4 $ and $\chi^{(1)}_q(G_{21}) = \chi(G_{21}) = 5$. Therefore, $G_{21}$ is the smallest graph known to exhibit a separation between $\chi_q$ and $\chi^{(1)}_q$, and in particular, the smallest known separation between the classical and quantum chromatic numbers such that no quantum colouring achieving the separation can be built by appealing to proposition \ref{quaternion}. Moreover, this gives the first proof of the following result:
\begin{theorem} \label{separation21}
The parameters $\chi^{(1)}_q$ and $\chi^{(2)}_q$ are distinct.
\end{theorem}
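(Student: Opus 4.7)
The plan is to exhibit $G_{21}$ as an explicit witness: I will construct a specific graph on $21$ vertices and establish
\[
\chi^{(2)}_q(G_{21}) \leq 4 < 5 \leq \chi^{(1)}_q(G_{21}),
\]
which immediately proves the theorem. The two inequalities are handled independently, and the graph $G_{21}$ must be engineered so that both hold simultaneously.

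For the upper bound $\chi^{(2)}_q(G_{21}) \leq 4$, I would exhibit an explicit rank-$2$ quantum $4$-colouring: a shared state $|\psi\rangle$ together with, for every vertex $v$ and colour $c \in [4]$, a rank-$2$ projector $P_{v,c}$ on a $4$-dimensional local space satisfying $\sum_{c} P_{v,c} = I$ at each vertex and $(P_{v,c} \otimes P_{w,c})|\psi\rangle = 0$ whenever $v \sim w$. Because the intent is that no rank-$1$ quantum $4$-colouring exists, this construction cannot proceed by invoking proposition \ref{quaternion} on a $4$-dimensional orthogonal representation; instead, $G_{21}$ should be designed from scratch around an algebraic identity realisable by rank-$2$ projectors, a natural candidate being families of pairs of orthogonal two-dimensional subspaces of $\mathbb{C}^4$ indexed by the vertices, glued together so that the required commutation/orthogonality relations hold on $|\psi\rangle$.

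For the lower bound $\chi^{(1)}_q(G_{21}) \geq 5$, I would use the first part of proposition \ref{quaternion}, which yields $\chi^{(1)}_q(G_{21}) \geq \xi(G_{21})$, reducing the claim to the orthogonal-rank bound $\xi(G_{21}) \geq 5$. To prove this, I would argue by contradiction: suppose there is an assignment $v \mapsto u_v$ of nonzero vectors in a $4$-dimensional Hilbert space making adjacent vertices orthogonal, fix a basis by choosing any $K_4$-subgraph of $G_{21}$, express the remaining $u_v$ in that basis, and propagate the orthogonality constraints through the graph using its rich local clique structure. The aim is to reduce the vectors to a small number of symbolic parameters and to derive, from the final layer of adjacencies, an infeasible polynomial system whose infeasibility can be certified either by hand or via a Gr\"obner basis computation.

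The main obstacle is the lower bound $\xi(G_{21}) \geq 5$. Standard spectral and SDP-based lower bounds on orthogonal rank, such as the Lov\'asz $\vartheta$ number on the complement, are dominated by $\chi_q$ and hence cannot exceed $4$ on $G_{21}$, so the bound must come from a combinatorial argument that exploits the specific construction. The delicate point is therefore to \emph{design} $G_{21}$ so that (i) a carefully chained collection of overlapping $4$-cliques forces enough rigidity in any putative $4$-dimensional orthogonal representation to produce a contradiction, while (ii) the global constraints remain loose enough that a genuine rank-$2$ quantum $4$-colouring still survives. Balancing these two opposing requirements is, I expect, what drives both the size and the detailed structure of $G_{21}$, and is the technical heart of the proof.
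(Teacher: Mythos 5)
Your high-level strategy matches the paper's exactly: exhibit $G_{21}$, prove $\chi^{(2)}_q(G_{21}) \leq 4$ and $\chi^{(1)}_q(G_{21}) \geq 5$, with the latter reduced to $\xi(G_{21}) \geq 5$ via the first part of Proposition~\ref{quaternion}, and your observation that SDP relaxations such as $\overline{\vartheta}$ sit below $\chi_q$ and are therefore capped at $4$ on $G_{21}$ is correct and important.

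However, your sketch of the upper bound contains a dimensional error that hides the key missing idea. You ask for a PVM $\{P_{v,c}\}_{c\in[4]}$ of rank-$2$ projectors on a $4$-dimensional local space with $\sum_c P_{v,c}=I$; four mutually orthogonal rank-$2$ projectors have total rank $8$ and cannot resolve the identity on $\mathbb{C}^4$, so no such object exists. The paper's resolution is the vector-clump formalism: a $(2,2)$-clump is an orthonormal basis of $\mathbb{C}^4$ organised into two pairs, with a bespoke orthogonality relation between clumps, and Theorem~\ref{clumps} converts any $(r,k)$-clump representation of a graph into a rank-$r$ quantum $k^2$-colouring on $\mathbb{C}^{rk^2}$ by a discrete-Fourier blow-up indexed by $\mathbb{Z}_k\times\mathbb{Z}_k$. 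For $(r,k)=(2,2)$ this turns combinatorial data in $\mathbb{C}^4$ into a genuine rank-$2$ quantum $4$-colouring on $\mathbb{C}^8$, and $G_{21}$ is \emph{defined} as the orthogonality graph of $21$ such clumps. Without this mechanism (or an equivalent one) you cannot complete the upper bound, and you would also have no concrete candidate graph to feed the lower bound.

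On the lower bound, reducing $\xi(G_{21}) \geq 5$ to infeasibility of a polynomial system and certifying it by Gr\"obner bases is a genuinely different route from the paper's. The paper instead proves a generalisation of the Arends--Ouaknine--Wampler square-free criterion (Theorem~\ref{expansion}): if $G$ contains a $K_{k-1,2}$, then $G$ has a $k$-dimensional orthogonal representation if and only if at least one member of a finite list of vertex- and set-identifications of $G$ does, and it runs this as a branch-and-bound recursion that bottoms out when $\xi_{SDP}$ — applied to the \emph{derived} graphs, on which it can exceed $4$ even though $\xi_{SDP}(G_{21})=4$ — certifies infeasibility; for $G_{21}$ this terminated after $128$ SDP evaluations. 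Your symbolic approach would give an exact algebraic certificate and is sound in principle, but with $21$ vectors in $\mathbb{C}^4$ you face on the order of $80$ real unknowns up to gauge, and neither termination nor practicality is clear. Both routes are legitimate; the paper's combinatorial recursion with small SDP leaves is the one known to work on this instance.
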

As described in subsection 4.1, the graph $G_{21}$ is obtained as the orthogonality graph of the vector clumps listed in appendix A, which theorem \ref{clumps} shows how to turn into a rank-two quantum 4-colouring of $G_{21}$. It is easy to show that $\chi(G_{21}) = 5$ using a computer, so that to show that $\chi^{(1)}_q(G_{21}) = 5$, it is enough to establish that $\xi(G_{21}) > 4$, as per proposition \ref{quaternion}. Our computer-assisted proof of this fact, which is described in subsection 4.2, relies on a new branch-and-bound-like algorithm predicated on a generalisation of the square-free criterion of Arends, Ouaknine and Wampler (\cite{arends}) and on the graph parameter $\xi_{SDP}$ of Paulsen, Severini, Stahlke, Todorov and Winter (\cite{chi-co}), which is a strengthening of sorts of the classical $\vartheta$ number of Lovász  (\cite{lovasz}).

\section{Preliminaries}
The graphs under consideration in this paper are all finite and simple, and the size of a graph will always be taken to mean the number of vertices in it. We will frequently use the shorthand $[n]$ to mean $\{1,...,n\}$. \\

Let $G$ be a graph. For a given $k \in \mathbb{N}$, a classical \textit{$k$-colouring} of $G$ is an assignment $\{c_v\}_{v \in V(G)}$ of elements of $[k]$, which we traditionally refer to as colours, to the vertices of $G$ such that $c_u \neq c_v$ for all $(u,v) \in E(G)$. The smallest value of $k$ for which a $k$-colouring of $G$ exists is called the \textit{chromatic number} of $G$, denoted $\chi(G)$. A \textit{clique} is a set of pairwise adjacent vertices of $G$: the size of the largest clique of $G$ is called the \textit{clique number} of $G$, denoted $\omega(G)$. An \textit{independent set} is a set of pairwise nonadjacent vertices of $G$.  A \textit{$k$-dimensional orthogonal representation} of $G$ is an assignment of unit vectors $\{\ket{\psi}_v\}_{v \in V(G)}$ in $\mathbb{C}^k$ to the vertices of $G$ such that, for all $(u,v) \in E(G)$, $\braket{\psi_u|\psi_v} = 0$: the smallest value of $k$ for which $G$ admits a $k$-dimensional orthogonal representation is called the \textit{orthogonal rank} of $G$, denoted $\xi(G)$. The following is standard and simple to see:
\begin{proposition} \label{sandwich}
For all graphs $G$, it holds that
\[\omega(G) \leq \xi(G) \leq \chi(G)\]
\end{proposition}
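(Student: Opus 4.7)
The plan is to establish the two inequalities independently, both via direct constructions that are essentially the canonical ones.

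For the lower bound $\omega(G) \leq \xi(G)$, I would fix any $\xi(G)$-dimensional orthogonal representation $\{\ket{\psi_v}\}_{v \in V(G)}$ of $G$ and let $K \subseteq V(G)$ be a clique of size $\omega(G)$. By definition of orthogonal representation, the vectors $\{\ket{\psi_v}\}_{v \in K}$ are pairwise orthogonal, and being unit vectors, in particular nonzero, so they are linearly independent in $\mathbb{C}^{\xi(G)}$. Hence $|K| \leq \xi(G)$, yielding $\omega(G) \leq \xi(G)$.

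For the upper bound $\xi(G) \leq \chi(G)$, I would start from a proper $\chi(G)$-colouring $\{c_v\}_{v \in V(G)}$ of $G$ and define $\ket{\psi_v} := \ket{e_{c_v}} \in \mathbb{C}^{\chi(G)}$, where $\{\ket{e_i}\}_{i \in [\chi(G)]}$ is the standard basis. For every edge $(u,v) \in E(G)$, we have $c_u \neq c_v$ and therefore $\braket{\psi_u|\psi_v} = \braket{e_{c_u}|e_{c_v}} = 0$, so this is a valid $\chi(G)$-dimensional orthogonal representation of $G$, giving $\xi(G) \leq \chi(G)$.

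Neither step presents a real obstacle: the only things being used are the definitions of $\omega$, $\xi$, $\chi$ and the elementary fact that a collection of pairwise orthogonal nonzero vectors in $\mathbb{C}^k$ has size at most $k$.
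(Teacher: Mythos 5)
Your proof is correct and is precisely the canonical argument; the paper itself gives no proof, simply remarking that the proposition is ``standard and simple to see,'' and the two-step argument you supply (pairwise-orthogonal unit vectors on a clique are linearly independent, and a proper colouring lifts to a standard-basis orthogonal representation) is exactly the standard one being alluded to.
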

\bigskip
Following Paulsen and Todorov (\cite{operatorsystems}), each entanglement model $t \in \{q, qa, qc\}$ gives rise to a quantum chromatic number $\chi_t$, defined to be the smallest value of $k$ for which Alice and Bob can win the $k$-colouring game on a given graph with certainty when given access to correlations in the corresponding correlation set. We easily have, for every graph $G$:
\[\chi_{qc}(G) \leq \chi_{qa}(G) \leq \chi_q(G) \leq \chi(G)\]
In line with the literature, we always mean $\chi_q$ when we speak of `the' quantum chromatic number, although we are mainly concerned with small graphs in this paper, for which the three quantum chromatic numbers are expected to always coincide. Though it had already been defined in a passing remark in \cite{avis} (which attributes its definition to Patrick Hayden), the extensive study of the parameter $\chi_q$ was first undertaken by \cite{cameron06}, while the specific study of $\chi_{qc}$ was first carried out in \cite{chi-co}. The three quantum chromatic numbers were recently proven to all be uncomputable and distinct by \cite{harris}, building on the work of Ji (\cite{ji}) and on recent breakthrough results in the theory of nonlocal games, namely those of Slofstra (\cite{slofstra}, \cite{slofstra2}) and of Ji, Natarajan, Vidick, Wright and Yuen (\cite{MIPRE}). \\

In the finite-dimensional case, we have the following convenient structure theorem due to \cite{cameron06}, which provides a simpler way to specify a quantum $k$-colouring of a given graph: 
\smallskip
\begin{theorem}[\cite{cameron06}] \label{structurecameron}
Given a graph $G$ and $k \in \mathbb{N}$, $G$ is quantumly $k$-colourable if and only if, for some finite-dimensional complex Hilbert space $\mathcal{H}$, there exists an assignment of a projective measurement $\{E^v_c\}_{c \in [k]}$ on $\mathcal{H}$ to every $v \in V(G)$ in such a way that for every $(u,v) \in E(G)$ and for every $c \in [k]$, it holds that $E^u_c E^v_c = 0$. This assignment corresponds to the following quantum $k$-colouring of $G$: Alice and Bob share the standard maximally entangled state on $\mathcal{H}_A \otimes \mathcal{H}_B$; on input $x$, Alice measures her part of the state with respect to $\{E^x_a\}_{a \in [k]}$; on input $y$, Bob measures his part of the state with respect to $\{\overline{E}^y_b\}_{b \in [k]}$; and they both output the results. Furthermore, a perfect entangled strategy for the $k$-colouring game on $G$ can be assumed to be of the above form without loss of generality, i.e. without increasing the local dimension of the shared entangled state. Moreover, the above projectors can all be assumed to be of the same rank $r$, though at the cost of potentially increasing the local dimension of the shared entangled state.
\end{theorem}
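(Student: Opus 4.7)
My plan is to prove Theorem \ref{structurecameron} in three parts corresponding to its three assertions: the sufficiency of the projective measurement assignment (the ``if'' direction), the necessity of that structure for any perfect entangled strategy without increasing the local dimension (the ``only if'' direction), and the rank-equalisation addendum.

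For sufficiency, I would verify directly that the prescribed strategy wins the $k$-colouring game with certainty. Let $d = \dim \mathcal{H}$ and let $\ket{\phi} = d^{-1/2} \sum_i \ket{i}\ket{i}$ be the standard maximally entangled state on $\mathcal{H}_A \otimes \mathcal{H}_B$. Using the identity $(M \otimes I)\ket{\phi} = (I \otimes M^T)\ket{\phi}$ together with the Hermiticity of each $E^v_c$, a short calculation gives that the probability of outcomes $(a,b)$ on input pair $(x,y)$ equals $\frac{1}{d} \mathrm{Tr}(E^x_a E^y_b)$. This vanishes in the two forbidden cases: when $x = y$ and $a \neq b$, by orthogonality of distinct projectors within the same projective measurement; and when $(x,y) \in E(G)$ and $a = b$, by the assumed edge condition $E^x_a E^y_a = 0$.

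For the converse, I would start from an arbitrary perfect entangled strategy, consisting of a shared state $\ket{\psi} \in \mathcal{H}_A \otimes \mathcal{H}_B$ together with POVMs $\{A^v_a\}$ for Alice and $\{B^v_b\}$ for Bob. Passing to the Schmidt decomposition $\ket{\psi} = \sum_i \sqrt{\lambda_i}\ket{i}\ket{i}$ with $\lambda_i > 0$ and restricting to the Schmidt support, I would exploit the synchronous condition (winning on inputs $(v,v)$ with probability one, which is forced by the promise) to show that $B^v_b$ is necessarily the entrywise complex conjugate of $A^v_b$ in the Schmidt basis and that the $A^v_a$ are in fact projections. The projectivity is obtained by rewriting the synchronous constraint as $\mathrm{Tr}(\rho_A (A^v_a)^{\dagger} A^v_a) = \mathrm{Tr}(\rho_A A^v_a)$, where $\rho_A$ is the reduced state, and then arguing that this forces $(A^v_a)^2 = A^v_a$ on the support of $\rho_A$ since $A^v_a - (A^v_a)^2$ is positive semidefinite. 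Once the measurements are projective on the Schmidt support, one checks that replacing $\ket{\psi}$ by the maximally entangled state on that support (whose dimension is at most that of the original local space) preserves the zero-probability of every losing outcome, because each such probability is a positive linear combination of traces of the form $\mathrm{Tr}(E^x_a E^y_b)$ that must already vanish. The edge condition $E^u_a E^v_a = 0$ then falls out of the requirement that outputs differ across edges.

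For the rank-equalisation clause, I would invoke the averaging trick of Cameron et al.: given a projective assignment with projections $E^v_c$ of possibly different ranks, one takes a direct sum of $k$ copies of the strategy, on the $i$-th copy cyclically relabelling the colour at each vertex by adding $i - 1 \pmod{k}$, so that the resulting projections at each vertex all have the same rank $r = \frac{1}{k}\sum_c \mathrm{rank}(E^v_c) = d/k$ (assuming $k \mid d$, which can be arranged by first tensoring the strategy with a trivial $k$-dimensional factor). The orthogonality relations required by the first part of the theorem are preserved block-by-block, at the cost of multiplying the local dimension by $k$. The main obstacle throughout is the converse direction, where extracting projectivity and the maximally entangled state from an arbitrary perfect strategy requires carefully exploiting synchronicity; by contrast, both sufficiency and the rank-equalisation step reduce to essentially routine manipulations once the framework is in place.
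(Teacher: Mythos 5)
The paper does not prove Theorem~\ref{structurecameron}; it is quoted from Cameron, Montanaro, Newman, Severini and Winter~\cite{cameron06}, so your write-up is best judged as a reconstruction of that reference. The overall shape (direct verification of sufficiency, Schmidt decomposition plus synchronicity for necessity, cyclic direct sum for rank equalisation) is right, but two steps are not correct as stated. In the converse direction, the claim that the synchronous constraint ``rewrites'' to $\mathrm{Tr}(\rho_A(A^v_a)^\dagger A^v_a)=\mathrm{Tr}(\rho_A A^v_a)$ hides the crux of the argument: synchronicity on inputs $(v,v)$ only gives $\bra{\psi}A^v_a\otimes I\ket{\psi}=\bra{\psi}A^v_a\otimes B^v_a\ket{\psi}$, and passing to the trace identity requires a Cauchy--Schwarz saturation argument. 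Setting $\ket{\alpha_a}=(A^v_a\otimes I)\ket{\psi}$ and $\ket{\beta_a}=(I\otimes B^v_a)\ket{\psi}$, one has $1=\sum_a\braket{\alpha_a|\beta_a}\le\sum_a\lVert\alpha_a\rVert\,\lVert\beta_a\rVert\le\bigl(\sum_a\lVert\alpha_a\rVert^2\bigr)^{1/2}\bigl(\sum_a\lVert\beta_a\rVert^2\bigr)^{1/2}\le 1$, and forcing equality throughout is what simultaneously gives projectivity on $\mathrm{supp}\,\rho_A$, the identity $\ket{\alpha_a}=\ket{\beta_a}$ (whence $B^v_a=\overline{A^v_a}$ in the Schmidt basis), \emph{and} the commutation $[\rho_A,A^v_a]=0$. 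The commutation matters for the replacement step too: the losing probabilities of the original strategy have the form $\mathrm{Tr}\bigl(\sqrt{\rho_A}\,A^x_a\sqrt{\rho_A}\,A^y_a\bigr)$, not $\mathrm{Tr}(A^x_a A^y_a)$, so your one-line justification that swapping to the maximally entangled state preserves zero probabilities is incomplete without invoking $[\rho_A,A^x_a]=0$.

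The rank-equalisation arithmetic is also off. In the $k$-fold cyclic direct sum, the projector assigned to colour $c$ at vertex $v$ is $\bigoplus_{i=1}^{k}E^v_{c\oplus(i-1)}$, which has rank $\sum_{c'}\mathrm{rank}(E^v_{c'})=\dim\mathcal{H}=d$, not $d/k$, and acts on a space of dimension $dk$. In particular no divisibility hypothesis $k\mid d$ is needed, and the aside about tensoring with a trivial $k$-dimensional factor should be dropped. The sufficiency calculation, by contrast, is correct as stated.
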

For a given $r$, following \cite{cameron06}, the rank-$r$ chromatic number $\chi^{(r)}_q(G)$ is defined to be the smallest $k$ such that $G$ admits an assignment of the form above with projectors all of rank exactly $r$. It is not hard to see that $\chi_q^{(r+1)}(G) \leq \chi^{(r)}_q(G)$ for all $r$, and by the last part of the previous theorem, it holds that $\chi_q(G) = \min_r \chi_q^{(r)}(G)$. As mentioned in the introduction, computability considerations imply that for every $r$, there exists a graph $G$ with $\chi_q(G) = 3$ and $\chi_q^{(r)}(G) > 3$, but although this seems very likely to be the case, no proof is known that for every $r$, there exists a graph $G$ with $\chi_q^{(r+1)}(G) < \chi_q^{(r)}(G)$. We will later describe the first proof of this for the $r=1$ case. \\

The following proposition collects some results of \cite{cameron06} about the relationships between orthogonal representations, rank-one colourings and classical colourings. While the first part of the proposition is quite simple to prove, the second part requires considerably more ingenuity and is based on a construction involving quaternions and octonions. Prior to this work, this construction was the basis for all known instances of small separations between the quantum and classical chromatic numbers. 
\begin{proposition}[\cite{cameron06}] \label{quaternion}
For all graphs $G$, it holds that
\[\xi(G) \leq \chi_q^{(1)}(G) \leq \chi(G)\]
Also, if, for some $k \in \{2,4,8\}$, $G$ admits an orthogonal representation in $\mathbb{R}^k$, then $\chi_q^{(1)}(G) \leq k$. 
\end{proposition}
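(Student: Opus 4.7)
The statement splits into three claims: $\xi(G) \leq \chi_q^{(1)}(G)$, $\chi_q^{(1)}(G) \leq \chi(G)$, and the division-algebra statement. I will tackle them in that order.

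For $\xi(G) \leq \chi_q^{(1)}(G)$, I will start from a rank-one quantum $k$-colouring, with projectors $E^v_c = \ket{\psi^v_c}\bra{\psi^v_c}$ on some Hilbert space $\mathcal{H}$. Since $\sum_c E^v_c = I_{\mathcal{H}}$ and each summand has rank one, taking traces forces $\dim\mathcal{H} = k$. Fixing any colour $c_0$, the assignment $v \mapsto \ket{\psi^v_{c_0}}$ is an assignment of unit vectors in a $k$-dimensional Hilbert space, and the defining relation $E^u_{c_0} E^v_{c_0} = 0$ for $(u,v) \in E(G)$ becomes $\braket{\psi^u_{c_0}|\psi^v_{c_0}} = 0$, so this is a $k$-dimensional orthogonal representation of $G$.

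For $\chi_q^{(1)}(G) \leq \chi(G)$, given a classical $k$-colouring $c$ of $G$, I will take $\mathcal{H} = \mathbb{C}^k$ with standard basis $\ket{1}, \dots, \ket{k}$ and, at vertex $v$, set $E^v_j = \ket{(j+c(v))\bmod k}\bra{(j+c(v))\bmod k}$. Each $\{E^v_j\}_{j\in[k]}$ is a projective measurement with rank-one elements, and for $(u,v)\in E(G)$ one has $c(u)\not\equiv c(v)\pmod{k}$, so the two projectors targeted by the same colour hit distinct standard basis vectors, giving $E^u_j E^v_j = 0$ for every $j$.

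For the last claim, I will exploit that for $k \in \{2,4,8\}$, $\mathbb{R}^k$ carries the structure of a normed composition algebra, namely $\mathbb{C}$, $\mathbb{H}$, or $\mathbb{O}$, with a standard orthonormal basis $e_1 = 1, e_2, \dots, e_k$. The crucial identity is the composition law
\[
\langle ac,\, bc \rangle \;=\; \langle a, b \rangle\, |c|^2,
\]
which I will derive from $|xc| = |x|\,|c|$ by polarising the identity $|a+b|^2 |c|^2 = |(a+b)c|^2$. Given an orthogonal representation $v \mapsto q_v \in \mathbb{R}^k$, I will define the rank-one projectors $E^v_c = \ket{q_v e_c}\bra{q_v e_c}$ inside $\mathbb{C}^k \supset \mathbb{R}^k$. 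The composition law instantly yields that $\{q_v e_c\}_{c=1}^k$ is an orthonormal basis for each $v$ (taking $a=e_c$, $b=e_{c'}$, $c=q_v$), and also that for $(u,v)\in E(G)$, $\langle q_u e_c,\, q_v e_c \rangle = \langle q_u, q_v\rangle\, |e_c|^2 = 0$ for every $c$, so $E^u_c E^v_c = 0$.

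The main potential obstacle is the $k=8$ case: octonion multiplication is non-associative, so one cannot straightforwardly argue that left multiplication by $q_v$ is an isometry by invoking $q_v$-inverse tricks, as one would in $\mathbb{C}$ or $\mathbb{H}$. The composition-law route avoids this entirely because it only uses multiplicativity of the norm, which holds in $\mathbb{R},\mathbb{C},\mathbb{H},\mathbb{O}$ by Hurwitz's theorem, and sidesteps associativity altogether. The restriction to $k\in\{2,4,8\}$ is then a reflection of Hurwitz's classification rather than an artifact of the argument.
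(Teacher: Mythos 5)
Your proof is correct and follows essentially the same route as the cited Cameron et al.\ argument: the two ``sandwich'' inequalities are the standard observations (dimension count plus fixing one colour for the lower bound, and a cyclic shift of a deterministic colouring for the upper bound), and the $\mathbb{R}^k$ part is the classical composition-algebra/Hurwitz construction with quaternions and octonions, where multiplicativity of the norm replaces any appeal to associativity. One small internal inconsistency: you define $E^v_c = \ket{q_v e_c}\bra{q_v e_c}$ but then invoke the right-composition identity with $a=e_c$, $b=e_{c'}$, $c=q_v$, which controls $\langle e_c q_v, e_{c'} q_v\rangle$ rather than $\langle q_v e_c, q_v e_{c'}\rangle$; since the left identity $\langle ca,cb\rangle = |c|^2\langle a,b\rangle$ holds equally in any normed composition algebra, the conclusion is unaffected, but you should pick one side of multiplication and stick with it.
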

\cite{cameron06} asked whether $\xi$ and $\chi_q^{(1)}$ coincide for all graphs. This was shown not to be the case by Scarpa and Severini (\cite{scarpa}). \cite{mancinska} gave a stronger version of this separation in the form of a graph $G_{13}$ on 13 vertices with $\xi(G_{13}) = 3$ and $\chi_q(G_{13}) = \chi^{(1)}_q(G_{13}) = \chi(G_{13}) = 4$. \\

We also record the following standard but useful result:
\begin{proposition} \label{nottwo}
A graph $G$ is such that either $\xi(G) = 2$ or $\chi_{qc}(G) = 2$ if and only if $G$ is bipartite, or equivalently, if $\chi(G) = 2$.
\end{proposition}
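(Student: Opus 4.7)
The statement decomposes into the forward and backward directions.

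\textbf{Forward direction.} If $\chi(G) = 2$, then $G$ has at least one edge, so $\omega(G) \geq 2$, and proposition \ref{sandwich} pinches $\xi(G)$ between $2$ and $2$. Similarly $2 \leq \chi_{qc}(G) \leq \chi_q(G) \leq \chi(G) = 2$, where the lower bound comes from the trivial fact that a strategy using a single colour cannot win on any edge.

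\textbf{Backward direction.} I would argue separately that each of $\xi(G) = 2$ and $\chi_{qc}(G) = 2$ rules out odd cycles, so that $G$ is bipartite. For $\xi(G) = 2$, I would fix a 2-dimensional orthogonal representation $\{\ket{\phi_v}\}$ and use the fact that in $\mathbb{C}^2$ the unit vector orthogonal to any given unit vector is unique up to a phase. Following an odd closed walk $v_0 v_1 \cdots v_{2k+1} = v_0$ then forces $\ket{\phi_{v_{2k+1}}}$ to be proportional to the unique unit vector orthogonal to $\ket{\phi_{v_0}}$, contradicting $v_{2k+1} = v_0$. For $\chi_{qc}(G) = 2$, I would invoke the commuting operator analog of theorem \ref{structurecameron} (a standard result in the $C^*$-algebraic framework underlying \cite{operatorsystems}) to obtain, for each vertex $v$, a projection $P^v$ on a common Hilbert space such that $P^u P^v = 0$ and $(I - P^u)(I - P^v) = 0$ for every edge $\{u,v\}$. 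Expanding the second relation and substituting the first gives $P^u + P^v = I$, so the relation $P^v = I - P^u$ holds along every edge. Propagating this around an odd cycle returns to $P^{v_0} = I - P^{v_0}$, i.e. $P^{v_0} = I/2$, which is not a projection on any nontrivial Hilbert space, yielding the desired contradiction.

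The only real subtlety is the appeal to a structure theorem in the commuting operator setting rather than the finite-dimensional one stated in theorem \ref{structurecameron}, but this is standard; the rest is elementary linear algebra.
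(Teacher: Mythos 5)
The paper states this proposition without proof, treating it as standard, so there is no in-text argument to compare against; I am therefore evaluating your proof on its own merits. Your proof is correct. The forward direction is exactly the right application of proposition \ref{sandwich} together with the chain $\chi_{qc} \leq \chi_q \leq \chi$ and the trivial lower bound of $2$ on any graph with an edge. The backward direction is the standard odd-cycle parity argument in both cases: in $\mathbb{C}^2$ the orthogonal complement of a line is a line, so vectors around an odd cycle are forced into two alternating one-dimensional subspaces, and likewise the projections are forced into two alternating values. The one step that does require care, and which you rightly flag, is the passage from $\chi_{qc}(G)=2$ to projections satisfying the \emph{operator} relations $P^u P^v = 0$ and $(I-P^u)(I-P^v) = 0$ along every edge, rather than merely relations in trace. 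For completeness, here is the mechanism: the tracial characterization of $C^s_{qc}$ quoted later in the paper (the lemma from \cite{chi-co} that precedes the description of the hierarchy of \cite{russell}) gives projections $\{E^v_c\}$ in a $C^*$-algebra and a tracial state $\tau$ with $\tau(E^u_c E^v_c) = 0$ for every edge $(u,v)$ and colour $c$. Writing $A = E^u_c E^v_c$, idempotence of $E^v_c$ and $E^u_c$ together with traciality yield $\tau(A) = \tau(E^v_c E^u_c E^u_c E^v_c) = \tau(A^*A)$, so after quotienting by the null ideal of $\tau$ (equivalently, passing to a faithful tracial GNS representation) one obtains $E^u_c E^v_c = 0$ as an operator identity; setting $P^v = E^v_1$ then produces exactly the two relations you propagate around the odd cycle. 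With that spelled out, nothing is missing.
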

\bigskip
In our computations, the real-valued graph parameter $\xi_{SDP}$ of \cite{chi-co} will be very useful to us. Despite its being NP-hard to compute in general, for the small graphs that came under scrutiny in this work, $\xi_{SDP}$ was found to be just as efficiently computable as the $\vartheta$ number while providing stronger bounds. Given a graph $G$, $\xi_{SDP}(G)$ is defined by the following semidefinite program, where $|V(G)| = n$ and where $V(G)$ is being identified with $[n]$: \\
\begin{alignat}{2} 
   \xi_{SDP}(G) =  \quad \quad \quad \min\   & M_{1,1} &\\
   \mbox{s.t. } & M_{i,j} \geq 0 &  i,j \in [n+1]\\
		& M_{1,i+1} = M_{i+1,i+1} = 1,  & i \in [n]\\
     	& \sum_{i \in S} M_{i+1,j+1} \leq 1 & \mbox{ $S$ a clique of $G$, } j \in [n]\\
     	& M_{1,1} + \sum_{i \in S, j \in T} M_{i+1,j+1} \geq |S| + |T| & \mbox{ $S$, $T$ cliques of $G$} \\
     	& M \in \mathbb{S}^{n+1}
\end{alignat}
\smallskip
The following proposition shows that $\xi_{SDP}$ yields bounds on the same parameters of interest as the $\vartheta$ number does, and justifies our thinking of $\xi_{SDP}$ as a strengthening of the $\vartheta$ number. The only part of the proposition that appears to be new is the simple (but important for us) observation that $\xi_{SDP}$ lower bounds the orthogonal rank.\\
\begin{proposition} \label{xisdp}
For all graphs $G$, it holds that
\[\omega(G) \leq \overline{\vartheta}(G) \leq \xi_{SDP}(G) \leq \xi(G), \chi_{qc}(G)\]
\end{proposition}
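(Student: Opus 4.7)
The plan is to observe that the chain $\omega(G) \leq \overline{\vartheta}(G) \leq \xi_{SDP}(G) \leq \chi_{qc}(G)$ is already contained in \cite{chi-co}, so the only genuinely new content is the inequality $\xi_{SDP}(G) \leq \xi(G)$. I would settle this by exhibiting an explicit feasible point of the defining SDP whose objective value equals $\xi(G)$, constructed directly from an optimal orthogonal representation of the graph.

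Set $d = \xi(G)$ and fix unit vectors $v_1, \ldots, v_n \in \mathbb{C}^d$ with $\braket{v_i|v_j} = 0$ for every $(i,j) \in E(G)$. Define a candidate matrix $M$ by $M_{1,1} = d$, $M_{1,i+1} = M_{i+1,1} = 1$ and $M_{i+1,i+1} = 1$ for each $i \in [n]$, and $M_{i+1,j+1} = |\braket{v_i|v_j}|^2$ for distinct $i,j \in [n]$. The key observation is that $M$ is the Gram matrix of the family $\{I, \ket{v_1}\bra{v_1}, \ldots, \ket{v_n}\bra{v_n}\}$ inside the Hermitian matrices on $\mathbb{C}^d$ equipped with the Hilbert-Schmidt inner product, since $\mathrm{tr}(I \cdot \ket{v_i}\bra{v_i}) = 1$, $\mathrm{tr}(\ket{v_i}\bra{v_i} \cdot \ket{v_j}\bra{v_j}) = |\braket{v_i|v_j}|^2$ and $\mathrm{tr}(I^2) = d$. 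This realisation delivers both the positive semidefiniteness of $M$ and the entrywise nonnegativity of its coefficients in one stroke.

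The two families of clique constraints would then be handled as follows. For a clique $S$ and any $j \in [n]$, the vectors $\{v_i\}_{i \in S}$ form an orthonormal set, hence extend to an orthonormal basis of $\mathbb{C}^d$, so Bessel's inequality applied to $v_j$ gives $\sum_{i \in S} |\braket{v_i|v_j}|^2 \leq 1$. For a pair of cliques $S, T$, introduce the orthogonal projections $P_S = \sum_{i \in S} \ket{v_i}\bra{v_i}$ and $P_T = \sum_{j \in T} \ket{v_j}\bra{v_j}$, of ranks $|S|$ and $|T|$ respectively; expanding the inequality $\mathrm{tr}((I - P_S)(I - P_T)) \geq 0$, which holds because the trace of a product of two positive semidefinite operators is nonnegative, yields $\sum_{i \in S, j \in T} |\braket{v_i|v_j}|^2 = \mathrm{tr}(P_S P_T) \geq |S| + |T| - d$, and combining this with $M_{1,1} = d$ gives the desired $M_{1,1} + \sum_{i \in S, j \in T} M_{i+1,j+1} \geq |S| + |T|$.

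The only step that feels like a genuine insight rather than routine verification is the inequality $\mathrm{tr}(P_S P_T) \geq |S| + |T| - d$: it is precisely what forces the natural calibration $M_{1,1} = d$ and makes the construction close up. Once it is in hand, the rest of the argument is bookkeeping.
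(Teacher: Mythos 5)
Your proposal is correct, and it takes a genuinely different route from the paper for the one inequality that is actually new. The paper's proof is a citation chain: it appeals to the parameters $\xi_{tr}$ and $\xi_f$ of \cite{chi-co} and uses the sequence $\xi_{SDP} \leq \xi_{tr} \leq \xi_f \leq \xi$ (along with $\xi_{tr} \leq \chi_{qc}$ for the other branch), deferring the definitions and inequalities to that reference. You instead prove $\xi_{SDP}(G) \leq \xi(G)$ from scratch by constructing an explicit feasible matrix $M$ for the SDP (1)--(6) out of a minimal orthogonal representation, realising it as the Gram matrix of $\{I, \ket{v_1}\bra{v_1}, \ldots, \ket{v_n}\bra{v_n}\}$ in the Hilbert--Schmidt inner product; this one observation gives positive semidefiniteness, symmetry, and entrywise nonnegativity simultaneously, while the two families of clique constraints fall out of Bessel's inequality and the nonnegativity of $\mathrm{tr}((I-P_S)(I-P_T))$. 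Both approaches are valid. What the paper's route buys is brevity, at the cost of the reader having to unpack $\xi_{tr}$ and $\xi_f$; what your route buys is a self-contained proof that also makes transparent \emph{why} the specific constraints of $\xi_{SDP}$, including the somewhat opaque two-clique constraint (5), are exactly the ones satisfied by a $d$-dimensional orthogonal representation with $d = M_{1,1}$ — which is arguably the more illuminating way to see the result. One small point worth making explicit in a write-up: when $S$ and $T$ overlap, the diagonal terms $M_{i+1,i+1}=1$ for $i \in S\cap T$ are accounted for automatically, since $|\braket{v_i|v_i}|^2 = 1$ as well, so the identity $\sum_{i\in S, j\in T} M_{i+1,j+1} = \mathrm{tr}(P_S P_T)$ holds with no caveats.
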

\begin{proof}
The first inequality is a standard result, and the second is shown in \cite{chi-co}. The last two follow from the fact that $\xi_{SDP}(G) \leq \xi_{tr}(G) \leq \chi_{qc}(G)$ and $\xi_{tr}(G) \leq \xi_f(G) \leq \xi(G)$ (see \cite{chi-co} for the definitions of the corresponding parameters).
\end{proof}
We restricted ourselves to maximal cliques of $G$ in the program (1)-(6) in order to reduce the number of constraints and therefore the solving time, which was generally not found to affect the objective value of the program, and anyway, can only decrease it, so that we still get lower bounds in this way. Also, since the graph parameters that we are directly concerned with are integer valued, only the value of $\ceil{\xi_{SDP}}$ is of interest to us. In order to show that $k \leq \ceil{\xi_{SDP}}$ for some $k$, we look for a feasible solution for the dual of the program (1)-(6) with corresponding objective value at least $k - 1 + \epsilon$, for a fixed small positive value of $\epsilon$. In our implementation, this dual is solved using the \texttt{COSMO} solver of Garstka, Cannon and Goulart (\cite{COSMO}). \\

Finally, we recall that for a given graph $G$, the graph operation called \textit{vertex identification} with respect to two distinct vertices $u,v$ of $G$ amounts to deleting the vertex $u$ and adding the edge $(v,w)$ to the resulting graph for every $w \notin \{u,v\}$ such that $(u,w) \in E(G)$. 

\section{A computer-assisted proof of the minimality of $G_{14}$}

This section describes a computer-assisted proof of theorem \ref{noseparation}. Our proof technique is based on the following simple observation: if $G$ is a graph exhibiting a separation between the quantum and classical chromatic numbers and if $H$ is a subgraph of $G$ with $\chi(G) = \chi(H)$, then $H$ necessarily also exhibits the desired separation, since quantum $k$-colourability is a hereditary property. Defining a graph $G$ to be \textit{edge-critical} if it contains no isolated vertices and if, for every proper subgaph $H$ of $G$, we have $\chi(H) < \chi(G)$, or, equivalently, the removal of any vertex or edge from $G$ causes its chromatic number to decrease, it is then apparent that if there existed a counterexample to theorem \ref{noseparation}, an edge-critical counterexample would necessarily exist also. Defining a graph $G$ to be \textit{edge-$k$-critical} if it is edge-critical and $\chi(G) = k$, we will list all edge-$k$-critical graphs on $n$ vertices for $4 \leq k \leq n \leq 13$ and for $k=4$, $n=14$. Theorem \ref{noseparation} is then equivalent to the statement that none of the resulting graphs exhibits a separation between the classical and quantum chromatic numbers: note that the edge-3-critical graphs can be omitted in view of proposition \ref{nottwo}. The proof of the theorem is then completed by using a semidefinite hierarchy due to \cite{russell} to show that for all the resulting graphs $G$, the commuting operators value of the $(\chi(G)-1)$-colouring game on $G$ is strictly less than one. It should be noted that this also rules out the possibility that $\chi_{qc}(G) < \chi(G)$, so that theorem \ref{noseparation} goes through for $\chi_{qc}$ as well. The algorithm that we used to enumerate edge-$k$-critical graphs on a given number of vertices is the subject of subsection 3.1, and subsection 3.2 explains in more detail how we go about proving a lower bound on the quantum chromatic number of a given graph.

\subsection{Enumerating edge-critical graphs}
Given $4 \leq k \leq n$, we describe our algorithm for generating all edge-$k$-critical graphs on $n$ vertices. Our approach is built upon the \texttt{geng} program of the NAUTY package of McKay (\cite{mckay}), which enumerates all graphs on a given number of vertices satisfying certain properties exactly once up to isomorphism. \texttt{geng} allows the user to specify certain such properties out of the box, such as connectedness or minimum degree, and also allows for the implementation of custom filters through the functions \texttt{prune} and \texttt{preprune}. Algorithm \ref{geng} describes the role of these functions in the course of the algorithm. \texttt{preprune} is called at an earlier stage and more often than \texttt{prune}, while \texttt{prune} is called at most once on every graph. 
\begin{algorithm} 
\caption{A cartoon depiction of the inner workings of \texttt{geng}. \texttt{shouldskip} makes sure that only one isomorphic copy of $G$ is examined and enforces the restrictions specified by the user. \texttt{preprune} and \texttt{prune} are to be implemented by the user.  } \label{geng}
\begin{algorithmic}
\Function{\texttt{enumerate}}{$G$, $n$}
	\If{\texttt{preprune}($G$), \texttt{shouldskip}($G$) and \texttt{prune}($G$) are all false (tested in this order)}
		\If{$G$ has $n$ vertices}
			\State Record $G$
		\Else
			\ForAll{graphs $G'$ obtained from $G$ by adding a vertex to it and by adding edges between that vertex and the other vertices}
				\State \texttt{enumerate}($G'$, $n$)
			\EndFor
		\EndIf
	\EndIf
\EndFunction
\State \texttt{enumerate}(the graph on one vertex, $n$)
\end{algorithmic}
\end{algorithm}

In our implementation, we specify to \texttt{geng} that all graphs on $n$ vertices under consideration are to have minimum degree at least $k-1$. This is a correct restriction because if a vertex of a given graph $G$ has degree at most $\chi(G)-2$, it can be seen that deleting that vertex from the graph will not affect its chromatic number, which shows that the graph is not edge-critical. We now turn to explaining how our \texttt{preprune} and \texttt{prune} functions operate. In all that follows, we use the naive backtracking algorithm for testing for $k$-colourability and for listing colourings, with the slight twist that we begin by finding a large enough maximal clique $v_1, \ldots, v_l$ in the graph and we force the vertex $v_i$ to be coloured with colour $i$ for every $i$ to reduce the size of the search space. \\

The idea of our implementation is to do as much precomputation as possible on the graphs on $n-1$ vertices so as to minimise the work that needs to be done on the graphs on $n$ vertices, which are far more numerous. In our case, \texttt{preprune} always returns `false' unless it is called on a graph on $n$ vertices. Given a graph $G$ on $n-1$ vertices, $\texttt{prune}$ begins by making sure that $\chi(G) = k-1$. If this does not hold, it can be seen that no graph $G'$ obtained from $G$ by adding a vertex can be edge-$k$-critical and so $G$ may be pruned. Otherwise, we do some precomputation on $G$ to be able to run heuristic tests on any extension $G'$ of $G$ to rule out edge-$k$-criticality quickly in most cases. We compute the following data in \texttt{prune}: 
\begin{enumerate}
\item We list a number of distinct independent sets of size at most 2 $I_1, I_2, \ldots, I_n$, each with the property that there exists a $(k-1)$-colouring of $G$ such that, for some colour, the vertices coloured with that colour are precisely the vertices contained in that independent set. This is done by simply listing all $(k-1)$-colourings of $G$ and examining each of them. 
\item We pick distinct edges $(u_1, v_1), \ldots, (u_n, v_n)$ of $G$ arbitrarily. For every such edge $(u,v)$, we list all assignments of colours in $[k-1]$ to the vertices of $G$ such that any two adjacent vertices are assigned different colours except for $u$ and $v$, which are assigned the same colour. This is achieved by contracting $(u,v)$ and by listing the $(k-1)$-colourings of the resulting graph. In the event that too many colourings were found, which would harm performance, we give up on the edge and choose another one. The edges are then sorted by increasing number of colourings found. In our implementation, $n=4$ was found to yield the best performance.
\end{enumerate}
Then, in \texttt{preprune}, given a graph $G'$ on $n$ vertices which extends $G$, we run the following tests in the given order:
\begin{enumerate}
\item Heuristic $(k-1)$-colouring test: for every independent set that was previously computed for $G$, check if some vertex in that set is adjacent to the last vertex of $G'$. If not, by construction, $G'$ is $(k-1)$-colourable and may be pruned.
\item Heuristic edge-criticality test: for every edge $(u,v)$ picked previously, check if at least one of the previously computed assignment is such that, for some colour $c \in [k-1]$, $c$ was not assigned to any of the neighbors of the last vertex. If not, $G'$ is not edge-$k$-critical and may be pruned. To see why, suppose that $G'$ is edge-$k$-critical and consider the graph $G''$ obtained from $G'$ by removing the edge $(u,v)$. By hypothesis, $G''$ is $(k-1)$-colourable; and furthermore, any $(k-1)$-colouring of $G''$ must assign the same colour to $u$ and $v$ because, otherwise, the colouring would also be a valid $(k-1)$-colouring for $G'$, which is supposed to have chromatic number $k$. Therefore, if this test does not pass, either $\chi(G') = k-1$ or $\chi(G'') = k$. In either case, we can conclude that $G'$ is not edge-$k$-critical. 
\end{enumerate} 
Finally, in \texttt{prune}, a full-blown edge-$k$-criticality test is run. Little regard to efficiency is paid at this point because of how powerful the previous two heuristic tests are. 

\smallskip

Our implementation of the above algorithm, which is written in C, can be found over at \url{https://github.com/lalondeo/gencrit}. Brendan McKay gracefully agreed to host the lists we produced, which can now be found over at \url{https://users.cecs.anu.edu.au/~bdm/data/graphs.html}. The computation took roughly one year of CPU time, with the great majority of this time going into enumerating the 4-critical graphs on 14 vertices, and yielded around 13.8 million graphs. The number of edge-$k$-critical graphs on $n$ vertices for every $k$,$n$ is given in table \ref{numberofgraphs}. It should be mentioned that McKay had separately enumerated all edge-4-critical graphs on 13 vertices or less and his lists agree exactly with ours. The others were validated by first checking that every graph in the list is indeed edge-critical as well as by generating a large number of edge-critical graphs at random and by checking that they were all enumerated. It can therefore be asserted with high confidence that the lists we produced are complete. \\

\begin{table}[H] \label{numberofgraphs}
\caption{The number of edge-$k$-critical graphs for every graph size $n$, up to isomorphism. There are 13,778,383 graphs in total. It is interesting that the number of graphs on every diagonal appears to be converging to a definite value. We do not know why that is. }
\begin{tabular}{cl|llllllllll}
\cline{3-12}
\multicolumn{2}{l|}{\multirow{2}{*}{}}        & \multicolumn{10}{c|}{$k$}                                                                 \\
\multicolumn{2}{l|}{}                         & 4       & 5       & 6       & 7     & 8   & 9  & 10 & 11 & 12 & \multicolumn{1}{l|}{13} \\ \hline
\multicolumn{1}{|c}{\multirow{11}{*}{$n$}} & 4  & 1       & -       & -       & -     & -   & -  & -  & -  & -  & -                       \\
\multicolumn{1}{|c}{}                    & 5  & 0       & 1       & -       & -     & -   & -  & -  & -  & -  & -                       \\
\multicolumn{1}{|c}{}                    & 6  & 1       & 0       & 1       & -     & -   & -  & -  & -  & -  & -                       \\
\multicolumn{1}{|c}{}                    & 7  & 2       & 1       & 0       & 1     & -   & -  & -  & -  & -  & -                       \\
\multicolumn{1}{|c}{}                    & 8  & 5       & 2       & 1       & 0     & 1   & -  & -  & -  & -  & -                       \\
\multicolumn{1}{|c}{}                    & 9  & 21      & 21      & 2       & 1     & 0   & 1  & -  & -  & -  & -                       \\
\multicolumn{1}{|c}{}                    & 10 & 150     & 162     & 22      & 2     & 1   & 0  & 1  & -  & -  & -                       \\
\multicolumn{1}{|c}{}                    & 11 & 1,221    & 4,008    & 393     & 22    & 2   & 1  & 0  & 1  & -  & -                       \\
\multicolumn{1}{|c}{}                    & 12 & 14,581   & 147,753  & 17,036   & 395   & 22  & 2  & 1  & 0  & 1  & -                       \\
\multicolumn{1}{|c}{}                    & 13 & 207,969  & 8,311,809 & 1,479,809 & 25,355 & 395 & 22 & 2  & 1  & 0  & 1                       \\
\multicolumn{1}{|c}{}                    & 14 & 3,567,180 & ?       & ?       & ?     & ?   & ?  & ?  & ?  & ?  & ?                       \\ \cline{1-2}
\end{tabular}
\end{table}

\subsection{The pipeline for proving a lower bound on the quantum chromatic number of a given graph}
Having explained how the edge-critical graphs are enumerated, we now turn to describing a procedure for attempting to prove that a given graph does not admit a quantum $k$-colouring for a given value of $k$. We will then systematically apply this to every previously enumerated graph $G$ with $k = \chi(G)-1$ in order to show that $\chi_q(G) = \chi(G)$. We note that we are not the first to approach the problem of lower bounding the quantum chromatic number of a given graph using computation: \cite{mancinska} report that Piovesan and Burgdorf could find an alternative proof of their result that their graph $G_{13}$ is not quantumly 3-colourable using a computer algebra system. \cite{mancinska} remarked that this approach did not work in many other cases, and in particular, was seemingly unable to show that a graph does not admit a quantum $k$-colouring for any $k \geq 4$. This was later proven to be true in general by Helton, Meyer, Paulsen and Satriano (\cite{algebras}). Therefore, while their approach could conceivably have been used to deal with the edge-4-critical graphs, another proof strategy is required to handle the graphs with larger chromatic numbers. \\

Our proof technique is based on a semidefinite hierarchy of the kind that was first put forth independently by Navascués, Pironio and Ac\'{i}n (\cite{NPA1}) and by Doherty, Liang, Toner and Wehner (\cite{NPA2}). While this original hierarchy could have been used directly, in our case, a more efficient alternative exists in the form of the hierarchy of Russell (\cite{russell}), which is specialised to so-called \textit{synchronous} correlations. Given finite input and output sets $X$ and $A$, a correlation $p(a,b|x,y)_{a,b \in A, x,y \in X}$ is said to be synchronous if, for all input pairs $x,y \in X$, the outputs $a,b \in A$ are equal with probability one. By definition, any correlation which wins a colouring game with probability one is necessarily synchronous. Taking $X$ and $A$ to be fixed, for a given entanglement model $t \in \{q, qc\}$, we will write $C^s_t$ to mean the set of synchronous correlations in the entanglement model $t$. Given a nonlocal game $\mathcal{G}$ specified by identical input sets $X$, identical output sets $A$, an input distribution $q_{x,y}$ on $X \times X$ and a predicate $V: X \times X \times A \times A \mapsto \{0,1\}$, with respect to entanglement model $t \in \{q, qc\}$, the \textit{synchronous entangled value} (\cite{syncval}) of $\mathcal{G}$, denoted $\omega^s_t(\mathcal{G})$, is defined by the supremum 
\begin{alignat}{2} 
   \sup \   & \sum_{x, y \in X} q_{x,y} \sum_{a, b \in A} p(a,b|x,y) V(x,y,a,b) \\
   \mbox{s.t. } & p(a,b|x,y) \in C^s_t
\end{alignat}
Similarly to the original hierarchy of \cite{NPA1} and \cite{NPA2}, the hierarchy of \cite{russell} consists in a nonincreasing sequence $C_1 \supseteq C_2 \supseteq ...$ of synchronous correlation sets all containing $C^s_{qc}$ and converging to it in the limit and which are all specified by semidefinite constraints. For every level $i$, we then consider the semidefinite program
\begin{alignat}{2} 
   \max \   & \sum_{x, y \in X} q_{x,y} \sum_{a, b \in A} p(a,b|x,y) V(x,y,a,b) \\
   \mbox{s.t. } & p(a,b|x,y) \in C_i
\end{alignat}
These programs can be optimised efficiently on a computer and their optimal values provide asymptotically tight upper bounds on $\omega^s_{qc}(\mathcal{G})$, though the complexity of computing the optimal value of the $i$-th program is exponential in $i$, meaning that unless $\mathcal{G}$ is very small, only the first few upper bounds can feasibly be computed. In our case, given a graph $G$ and $k \in \mathbb{N}$, writing $\mathcal{G}_{G,k}$ to denote the $k$-colouring game on $G$ with the input distribution taken to be uniform over the legal inputs, our approach to showing that $G$ is not quantumly $k$-colourable is to try to prove the stronger statement that $\omega^s_{qc}(\mathcal{G}_{G,k}) < 1$ by means of the above hierarchy. It can be seen that this holds if and only if $k < \chi_{qc}(G)$, and as was mentioned in the preliminaries, it is now known that it may be that $\chi_{qc}(G) < \chi_q(G)$, in which case the approach we are outlining would be powerless at showing a tight lower bound on $\chi_q$. There is no way around this, at least in the $k=3$ case: it follows from computability considerations that the set of graphs that are not quantumly 3-colourable is not recursively enumerable, and hence that there exists no computational procedure for proving that a given graph is not quantumly 3-colourable that will systematically eventually succeed if this does hold. Since the graphs presently under consideration are quite small, our hope is that the approach we described will work nevertheless. \\

We now turn to describing in more detail how the hierarchy of \cite{russell} is built. Letting $\mathcal{A}$ be a $C^*$-algebra, a state $\tau: \mathcal{A} \mapsto \mathbb{R}$ is said to be \textit{tracial} if $\tau(AB) = \tau(BA)$ for all $A,B \in \mathcal{A}$. The linchpin of the hierarchy of \cite{russell} is the following lemma of \cite{chi-co}, which can be seen to parallel theorem \ref{structurecameron}:
\begin{lemma}[\cite{chi-co}]
The correlations in $C^s_{qc}$ are precisely those for which there exists a $C^*$-algebra $\mathcal{A}$, a tracial state $\tau: \mathcal{A} \mapsto \mathbb{R}$ and an assignment of projective measurements to the elements of $X$ $\{E^x_a\}_{x \in X, a \in A}$  such that, for every $x,y \in X$, $a,b \in A$, 
\[p(a,b|x,y) = \tau(E^x_a E^{y}_{b})\]
\end{lemma}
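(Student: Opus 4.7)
The plan is to prove the claimed equivalence in both directions, with the main work going into the extraction of a tracial state from a given synchronous $qc$ correlation.

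For the $(\Leftarrow)$ direction, I start from a $C^*$-algebra $\mathcal{A}$, a tracial state $\tau$, and the projective measurements $\{E^x_a\}$, and construct an explicit qc strategy. I apply the GNS construction to $(\mathcal{A}, \tau)$ to obtain a Hilbert space $\mathcal{H}$, a $*$-representation $\pi$, and a cyclic unit vector $|\xi\rangle$ with $\tau(T) = \langle \xi | \pi(T) | \xi\rangle$. Because $\tau$ is tracial, the map $\pi(T)|\xi\rangle \mapsto \pi(T^*)|\xi\rangle$ extends to an anti-unitary $J$ on $\mathcal{H}$, and $\overline{\pi}(T) := J \pi(T^*) J$ defines a second $*$-representation whose image commutes with $\pi(\mathcal{A})$. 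Setting Alice's operators to $\pi(E^x_a)$ and Bob's to $\overline{\pi}(E^y_b)$, with shared state $|\xi\rangle$, one checks $\langle \xi | \pi(E^x_a) \overline{\pi}(E^y_b) | \xi\rangle = \tau(E^x_a E^y_b)$ using the tracial property. Synchronicity is immediate since $\tau(E^x_a E^x_b) = 0$ for $a \neq b$ by orthogonality within a projective measurement.

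For the $(\Rightarrow)$ direction, suppose $p(a,b|x,y) = \langle \psi | E^x_a F^y_b | \psi\rangle$ comes from commuting families $\{E^x_a\}, \{F^y_b\}$ and state $|\psi\rangle$ on some Hilbert space. The first key step is to exploit synchronicity to derive the crucial identity $E^x_a |\psi\rangle = F^x_a|\psi\rangle$ for every $x,a$: expanding $1 = \sum_a \langle \psi | E^x_a F^x_a |\psi\rangle$ and comparing with $\sum_a \|E^x_a|\psi\rangle\|^2 = \sum_a \|F^x_a|\psi\rangle\|^2 = 1$, Cauchy--Schwarz forces the two families of vectors to coincide term by term. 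Next I restrict to the cyclic subspace $\overline{\mathcal{A}|\psi\rangle}$, let $\mathcal{A}$ be the $C^*$-algebra generated by the $E^x_a$, and set $\tau(T) := \langle \psi | T | \psi\rangle$. To prove $\tau$ is tracial, I show inductively that for any word $W = E^{x_1}_{a_1}\cdots E^{x_n}_{a_n}$ one has $W|\psi\rangle = F^{x_n}_{a_n}\cdots F^{x_1}_{a_1}|\psi\rangle$, using the commutation between $E$'s and $F$'s together with the one-step identity above. This produces a well-defined conjugate-linear involution reversing the order of generators, from which $\tau(AB) = \tau(BA)$ follows on a dense subalgebra and thus, by continuity, on all of $\mathcal{A}$. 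The target formula $\tau(E^x_a E^y_b) = p(a,b|x,y)$ then follows by substituting $E^y_b|\psi\rangle = F^y_b|\psi\rangle$.

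The main obstacle is the tracial property in the $(\Rightarrow)$ direction. The one-step synchronicity identity $E^x_a|\psi\rangle = F^x_a|\psi\rangle$ is easy, but bootstrapping it into a well-defined order-reversing map on arbitrary words requires care: one must verify that whenever two words give the same vector when applied to $|\psi\rangle$, their $F$-reversed counterparts also agree, and that the resulting map extends continuously to the GNS-like completion. Once this reversal map is in hand, the trace property and the correlation identity fall out essentially by inspection, so the whole argument hinges on this one technical verification.
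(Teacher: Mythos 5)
The paper does not prove this lemma — it is stated as a citation of Paulsen, Severini, Stahlke, Todorov and Winter, so there is no in-paper proof to compare against. Your reconstruction is correct and is essentially the standard proof from that line of work: the $(\Leftarrow)$ direction via GNS and the modular conjugation $J$ (giving a commuting representation in the commutant), and the $(\Rightarrow)$ direction via the Cauchy--Schwarz/norm-expansion argument $\sum_a \|E^x_a\psi - F^x_a\psi\|^2 = 1 + 1 - 2\sum_a p(a,a|x,x) = 0$, which yields $E^x_a|\psi\rangle = F^x_a|\psi\rangle$, then the commutation-driven induction giving $E^{x_1}_{a_1}\cdots E^{x_n}_{a_n}|\psi\rangle = F^{x_n}_{a_n}\cdots F^{x_1}_{a_1}|\psi\rangle$, and finally $\tau = \langle\psi|\cdot|\psi\rangle$ restricted to $C^*(\{E^x_a\})$. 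One remark: the concern you raise at the end about well-definedness of an order-reversing involution is not actually an obstacle you need to face. You never need a well-defined map on the algebra; you only need the word identity itself. For monomials $A,B$ in the $E$'s, write $\tau(AB) = \langle\psi|A\,B\psi\rangle$, replace $B\psi$ by its reversed $F$-word, commute that $F$-word to the left past $A$, move it to the bra side, and flip it back to the $E$-word $B^*$ using the identity again; what remains is exactly $\langle\psi|BA\psi\rangle = \tau(BA)$. Traciality on the dense $*$-subalgebra then extends by continuity of the state, and the correlation formula $\tau(E^x_a E^y_b) = \langle\psi|E^x_a F^y_b\psi\rangle = p(a,b|x,y)$ is a one-line substitution. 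So the proposal is sound; the perceived technical obstacle dissolves once the argument is phrased as a direct computation rather than as the construction of an auxiliary involution.
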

\smallskip
The way the semidefinite hierarchy of \cite{russell} is derived from this lemma is very similar to the way the original semidefinite hierarchy of \cite{NPA1} and \cite{NPA2} is defined. Taking the alphabet $\Sigma$ to be the collection of wooden symbols $\{E^x_a\}_{x \in X, a \in A}$, which we think of as being placeholders for projective measurements on a generic $C^*$-algebra, and given a finite set $S \subseteq \Sigma^*$ containing all words of length at most one, which we think of as being monomials over the $E^x_a$, we look at the space of so-called pseudo-states $\psi$, which are functions $S \times S \mapsto \mathbb{R}$ satisfying certain properties that would have to be satisfied if, for some $C^*$-algebra $\mathcal{A}$, there existed projective measurements $E^x_a$ in $\mathcal{A}$ and a tracial state $\tau$ on $\mathcal{A}$ such that $\psi(x^R, y) = \tau(x^\dagger y)$ for every $x, y \in S$. The corresponding correlation set $C_S$ is then defined to be the set of correlations $p(a,b|x,y)$ such that, for some such pseudo-state $\psi$, $p(a,b|x,y) = \psi(E^x_a, E^y_b)$ for all $x,y,a,b$: by construction and by the previous lemma, we have that $C^s_{qc} \subseteq C_S$. $\psi$ can be viewed as a square matrix of size $|S|$, which can be constrained to be positive semidefinite, and the algebraic constraints that are imposed on it are linear in its entries, so that membership in $C_S$ is indeed specified by a semidefinite program. The precise algebraic constraints that are being imposed are somewhat tedious to spell out, and the reader is referred to \cite{russell} for further details. Tying back to our previous high-level presentation, for every $i \in \mathbb{N}$, setting $S_i$ to be the collection of all words on $\Sigma$ of length at most $i$, we take $C_i = C_{S_i}$. For our purposes, however, this presentation is too coarse-grained. On the one hand, while the program (9)-(10) corresponding to the set $C_1$ is very small and can be optimised very quickly in all cases, for a significant proportion of our graphs $G$, it is too weak to rule out the existence of a quantum $(\chi(G)-1)$-colouring. In fact, corollaries 15 and 16 of Cubitt, Man\v{c}inska, Roberson, Severini, Stahlke and Winter (\cite{cubitt}) can be seen to imply that the test corresponding to $C_1$ will succeed in showing that a given graph $G$ is not quantumly $k$-colourable if and only if $k < \overline{\vartheta}^+(G)$, where $\vartheta^+$ is Szegedy's (\cite{szegedy}) strengthening of the Lovász $\vartheta$ number. Since $\overline{\vartheta}^+(G) \leq \xi_{SDP}(G)$ for all graphs, this test is subsumed by the first step of our pipeline. On the other hand, while the test corresponding to the set $C_2$ was found to be extremely strong, the resulting semidefinite program is quite large and takes a fair amount of time to solve for the larger graphs with large chromatic numbers. A simple workaround is to run the test with carefully chosen subsets of $S_2$ instead, which reduces the amount of computation required by a fair amount. \\

The full pipeline that was used for proving that a given graph $G$ satisfies $\chi_{q}(G) = \chi(G)$ is the following, where a particular ordering of $V(G)$ is chosen arbitrarily, where $k = \chi(G)-1$ and where, given a choice of monomials $S$, the program (9)-(10) is being optimised for the game $\mathcal{G}_{G,k}$ with correlation set $C_S$. We ran every test in the list sequentially until one could prove the nonexistence of a quantum $k$-colouring of $G$.
\begin{enumerate}
\item Try to prove that $\ceil{\xi_{SDP}(G)} = \chi(G)$. 
\item Run the test given by the hierarchy corresponding to the set
\[S = \{E^x_c E^y_c \mid x,y \in V(G), c\in [k], x < y\}\]
\item Run the test given by the hierarchy corresponding to the set
\[S' = \{E^x_c E^y_c \mid x,y \in V(G), c \in [k]\}\]
\item Run the test given by the hierarchy corresponding to a subset $S''$ of $C_2$ chosen by randomly discarding every monomial of the form $E^x_a E^y_b$ where $x > y$ with probability 50\%. This random choice is seeded so as to be reproducible. 
\end{enumerate}
In our implementation, the resulting semidefinite programs were all solved using the \texttt{COSMO} solver of Garstka, Cannon and Goulart (\cite{COSMO}). A certificate for the nonexistence of a quantum $k$-colouring was generated by the solver and was validated independently in exact arithmetic. Running the above pipeline on every graph took about 17 weeks of CPU time in total, with more detailed statistics being given by table \ref{latable}. The individual results for every graph can be found at \url{https://www.dropbox.com/scl/fi/r0yqvlltnrzamlzwz9xxq/critical_graphs_results.tar.gz?rlkey=c5bsskychxvewwv0o4ouk6bth&dl=0}. The experiment was run on several different machines of slightly varying computing power, so that the times given are indicative only. Also, a time limit was put on each test, so it may be that some graphs that should not have passed a test did because the solver could not produce a proof of the nonexistence of a quantum $(\chi(G)-1)$-colouring in time. 

\begin{table}[H] \label{latable}
\centering
\caption{Performance of each step of the above battery of tests}
\begin{tabular}{|llll|l}
\cline{1-4}
Test                    & \# of graphs filtered &  \# of remaining graphs & Average CPU time spent (in seconds) &  \\ \cline{1-4}
\multicolumn{1}{|l|}{1} & 10,728,817           & 3,049,566             & 0.015                               &  \\
\multicolumn{1}{|l|}{2} & 3,048,050            & 1,516                 & 3.3                                 &  \\
\multicolumn{1}{|l|}{3} & 1,510                & 6                     & 27.6                                &  \\
\multicolumn{1}{|l|}{4} & 6                    & 0                     & 48.0                                &  \\ \cline{1-4}
\end{tabular}
\end{table}

We see that the $\xi_{SDP}$ test prefiltered most graphs extremely quickly, whereas the second test, while much slower, could rule out a very significant proportion of the remaining graphs. All remaining graphs but six could then be ruled out by the third test. We note that, based on the data of table \ref{latable}, it may seem like the fourth test is not significantly more expensive than the third and hence that the third test could have been done without, but this is because the graphs the fourth test was run on all have chromatic number 4, and the fourth test can enormously slower than the third when applied to graphs with larger chromatic numbers. Interestingly, out of the six graphs that could not be ruled out by the first three tests, only one is on 13 vertices and turns out the be the only edge-4-critical subgraph of the graph $G_{13}$ of \cite{mancinska}, and the other five are on 14 vertices and were all found to be edge-4-critical subgraphs of graphs obtained by cloning a vertex in $G_{13}$. Our pipeline could recover their corollary 1 (which states that $G_{13}$ is not quantumly 3-colourable) in a bit less than two minutes, with 40 seconds being spent on test 4. \\

Since all the previously enumerated graphs could be shown to have equal classical and quantum chromatic numbers, the proof of theorem \ref{noseparation} is complete.

\section{The case of the graph $G_{21}$}
In this section, we give a novel mechanism for constructing quantum colourings, based on the notion of a vector clump which we introduce. This construction will then be put to use by exhibiting a graph $G_{21}$ with $\chi_q(G_{21}) = \chi^{(2)}_q(G_{21}) = 4$ and $\chi(G_{21}) = 5$. This is the contents of subsection 4.1. In subsection 4.2, we will describe a computer-assisted proof of the fact that $\xi(G_{21}) > 4$. Since $\chi(G_{21}) = 5$, this, together with proposition \ref{quaternion}, implies that $\xi(G_{21}) = \chi^{(1)}_q(G_{21}) = 5$. 

\subsection{The clump rank of a graph and $G_{21}$}
We begin by making the following definition:
\begin{definition}
A ($r$,$k$)-\textit{vector clump} is a collection of unit vectors $\{\ket{\psi_{i,j}}\}_{i \in [r], j \in [k]}$ in $\mathbb{C}^{rk}$ which are all pairwise orthogonal. Two ($r$,$k$)-clumps $\{\ket{\psi_{i,j}}\}$ and $\{\ket{\psi'_{i,j}}\}$ are said to be \textit{orthogonal} if, for every $i, i' \in [r]$, it holds that
\[\sum_{j=1}^k \braket{\psi_{i,j}|\psi'_{i',j}} = 0\]
\end{definition}
By analogy with the notion of an orthogonal representation, given a graph $G$, an assignment of $(r,k)$-vector clumps to the vertices of $G$ $\{\ket{\psi^v_{i,j}}\}_{v \in V(G), i \in [r], j \in [k]}$ such that for all $(u,v) \in E(G)$, $\{\ket{\psi^u_{i,j}}\}$ and $\{\ket{\psi^v_{i,j}}\}$ are orthogonal will be called a \textit{($r$,$k$)-clump representation} of $G$. For a given $r$, the \textit{rank-$r$ clump rank} of $G$, denoted $\xi^{(r)}_c(G)$, is defined to be the smallest $k$ such that $G$ admits a ($r$,$k$)-clump representation. It may not be immediately obvious that these parameters are well-defined, but they are, and we even have:
\begin{proposition}
For all graphs $G$ and all ranks $r$, it holds that
\[\xi^{(r)}_c(G) \leq \chi^{(r)}_q(G)\]
\end{proposition}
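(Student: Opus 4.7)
The plan is to construct a rank-$r$ clump representation directly from a rank-$r$ quantum $k$-colouring, where $k = \chi^{(r)}_q(G)$. By Theorem \ref{structurecameron}, such a colouring gives, for every $v \in V(G)$, a projective measurement $\{E^v_c\}_{c \in [k]}$ on some finite-dimensional Hilbert space $\mathcal{H}$, with every $E^v_c$ a projection of rank exactly $r$, and with $E^u_c E^v_c = 0$ for every $(u,v) \in E(G)$ and every $c \in [k]$. Since the $E^v_c$ sum to the identity for each fixed $v$, we have $\dim \mathcal{H} = rk$, so we may identify $\mathcal{H}$ with $\mathbb{C}^{rk}$.

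First, for every $v$ and every $c \in [k]$, I would pick an orthonormal basis $\{\ket{\psi^v_{i,c}}\}_{i \in [r]}$ of the range of $E^v_c$, and take these as the ($r$,$k$)-clump assigned to $v$. To check that this is indeed a valid clump, note that because the $E^v_c$ are mutually orthogonal projections that sum to the identity, the collection $\{\ket{\psi^v_{i,c}}\}_{i \in [r], c \in [k]}$ is the concatenation of orthonormal bases of pairwise orthogonal subspaces of $\mathbb{C}^{rk}$; in particular its elements are pairwise orthogonal, as required.

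Next, I would verify the clump-orthogonality condition along every edge. Let $(u,v) \in E(G)$ and fix $c \in [k]$. Expanding the two projections in their respective eigenbases gives
\[
0 \;=\; \mathrm{tr}(E^u_c E^v_c) \;=\; \sum_{i, i' \in [r]} \bigl|\braket{\psi^u_{i,c} | \psi^v_{i',c}}\bigr|^2,
\]
so each individual inner product $\braket{\psi^u_{i,c}|\psi^v_{i',c}}$ already vanishes. Summing over $c$ then trivially yields $\sum_{c \in [k]} \braket{\psi^u_{i,c}|\psi^v_{i',c}} = 0$ for all $i,i' \in [r]$, which is exactly the definition of orthogonality between the clumps at $u$ and $v$. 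This produces a valid $(r,k)$-clump representation of $G$ and hence $\xi^{(r)}_c(G) \leq k = \chi^{(r)}_q(G)$.

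There is essentially no hard step here: once one notices that a rank-$r$ projective measurement with $k$ outcomes lives on a space of dimension exactly $rk$ and so naturally partitions an orthonormal basis into $k$ blocks of size $r$, the clump structure is forced on us. The only mildly nontrivial observation is that the edge condition $E^u_c E^v_c = 0$ is in fact much stronger than the clump-orthogonality condition — it gives pointwise vanishing of each inner product rather than only the sum over $c$ — so the implication holds with slack, which is consistent with the expectation that $\xi^{(r)}_c$ is a potentially smaller parameter than $\chi^{(r)}_q$.
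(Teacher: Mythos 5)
Your proof is correct and follows essentially the same construction as the paper: for each vertex take an orthonormal basis of the range of each rank-$r$ projector $E^v_c$ to form the clump, and observe that the edge condition $E^u_c E^v_c=0$ makes the inner products vanish colour-by-colour, which is stronger than the clump-orthogonality requirement that only the sum over colours vanish. Your trace argument for that last step is a slight detour compared to the direct identity $\braket{\psi^u_{i,c}|\psi^v_{i',c}} = \bra{\psi^u_{i,c}}E^u_c E^v_c\ket{\psi^v_{i',c}} = 0$, but the conclusion is identical.
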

\begin{proof}
Let $k = \chi^{(r)}_q(G)$ and let $\{E^v_c\}_{v \in V(G), c \in [k]}$ be a corresponding rank-$r$ quantum $k$-colouring of $G$, consisting of projective measurements on $\mathbb{C}^{kr}$ with all projectors being of rank $r$. We may build a $(r,k)$-clump representation of $G$ as follows. For every $v \in V(G), j \in [k]$, take $\{\ket{\psi^v_{i,j}}\}_{i \in [r]}$ to be an orthonormal basis of the support of $E^v_j$. Clearly, for a fixed vertex $v$, these form an orthonormal basis of $\mathbb{C}^{kr}$. Since $\braket{\psi^x_{i,j}|\psi^y_{i',j}} = 0$ for all $(x,y) \in E(G)$, this constitutes a $(r,k)$-clump representation of $G$, as desired.
\end{proof}
\smallskip
Slightly more arduously, in the other direction, we can show:
\begin{theorem} \label{clumps}
For all graphs $G$ and all ranks $r$, it holds that
\[\chi^{(r)}_{q}(G) \leq \xi^{(r)}_c(G)^2\]
\end{theorem}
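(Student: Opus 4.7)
The plan is to build an explicit rank-$r$ quantum $k^2$-colouring of $G$ where $k = \xi^{(r)}_c(G)$, by working on the Hilbert space $\mathcal{H} = \mathbb{C}^{rk} \otimes \mathbb{C}^k$, which has dimension $rk^2$ as required. Index colours as pairs $(c,d) \in [k] \times [k]$, identified with $\mathbb{Z}_k \times \mathbb{Z}_k$, and let $\omega = e^{2\pi i / k}$. Given a $(r,k)$-clump representation $\{\ket{\psi^v_{i,j}}\}$, I would define, for each $v \in V(G)$, $i \in [r]$ and $(c,d) \in [k]^2$, the vector
\[\ket{W^v_{i,c,d}} \;=\; \frac{1}{\sqrt{k}} \sum_{j \in \mathbb{Z}_k} \omega^{jc}\, \ket{\psi^v_{i,j}} \otimes \ket{j+d},\]
then set $E^v_{(c,d)} = \sum_{i \in [r]} \ket{W^v_{i,c,d}}\bra{W^v_{i,c,d}}$ and verify that $\{E^v_{(c,d)}\}_{(c,d) \in [k]^2}$ is a valid rank-$r$ projective measurement that defines a quantum $k^2$-colouring in the sense of theorem \ref{structurecameron}.

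The first verification is that $\{\ket{W^v_{i,c,d}}\}_{i,c,d}$ is an orthonormal basis of $\mathcal{H}$ for each fixed $v$. A direct computation using $\braket{\psi^v_{i,j}|\psi^v_{i',j'}} = \delta_{i,i'}\delta_{j,j'}$ (from the clump structure) and $\braket{j+d|j'+d'} = \delta_{j-j',\,d'-d}$ collapses the double sum to $\frac{1}{k}\sum_j \omega^{j(c'-c)} \delta_{i,i'}\delta_{d,d'} = \delta_{i,i'}\delta_{c,c'}\delta_{d,d'}$, so the $rk^2$ vectors are pairwise orthonormal and span $\mathcal{H}$. This immediately implies each $E^v_{(c,d)}$ is a rank-$r$ projector and that $\sum_{(c,d)} E^v_{(c,d)} = I_\mathcal{H}$.

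The crucial step is checking that, for every edge $(u,v) \in E(G)$ and every colour $(c,d)$, $E^u_{(c,d)} E^v_{(c,d)} = 0$. This reduces to showing $\braket{W^u_{i,c,d}|W^v_{i',c,d}} = 0$ for all $i, i' \in [r]$. Expanding, the $\braket{j+d|j'+d}$ factor forces $j = j'$, which cancels the phases $\omega^{\pm jc}$ completely, leaving $\frac{1}{k}\sum_{j} \braket{\psi^u_{i,j}|\psi^v_{i',j}}$, which vanishes precisely by the clump-orthogonality hypothesis at the edge $(u,v)$.

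The main conceptual obstacle is discovering the construction itself; once the ansatz above is in place, the verifications are routine bookkeeping. Two ingredients are doing the work: the shift $\ket{j+d}$ on the auxiliary register, which aligns $j$ with $j'$ in the cross-vertex inner product so the clump condition can be applied, and the character $\omega^{jc}$, which distinguishes different colours sharing the same $d$ and yields orthonormality across $c$. Without the shift, the edge-orthogonality computation does not collapse onto the single clump-orthogonality sum; without the phase, vectors with different $c$ coincide and the collection fails to be a basis.
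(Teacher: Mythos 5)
Your construction is correct and is essentially the paper's own proof up to a trivial reparametrization: the paper writes $\ket{\phi^v_{i,(c_1,c_2)}} = \frac{1}{\sqrt{k}} \sum_j \zeta^{c_1 j}\ket{j}\ket{\psi^v_{i,j\oplus c_2}}$, and substituting $j \mapsto j - d$ (together with swapping the tensor factors and discarding a global phase) turns this into your $\ket{W^v_{i,c,d}}$. The verification steps — orthonormality of the $rk^2$ vectors per vertex via the character sum, and vanishing of the cross-vertex overlap via the clump-orthogonality condition after the auxiliary register forces $j=j'$ — match the paper's calculations exactly.
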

\begin{proof}
Suppose that $\xi^{(r)}_c(G) = k$, and take $\{\ket{\psi^v_{i, j}}\}$ to be a corresponding $(r,k)$-clump representation of $G$. We build a rank-$r$ quantum $k^2$-colouring of $G$ by appealing to theorem \ref{structurecameron}.  \\

Letting $\zeta$ be a fixed primitive $k$-th root of unity, for every $v \in V(G), i \in [r], c_1,c_2 \in [k]$, define $\ket{\phi^v_{i,(c_1,c_2)}} \in \mathbb{C}^{rk^2}$ by:
\[\ket{\phi^v_{i,(c_1,c_2)}} = \frac{1}{\sqrt{k}} \sum_{j=1}^k \zeta^{c_1 j} \ket{j} \ket{\psi^v_{i, j \oplus c_2}}\]
Where $j \oplus c_2$ is taken to mean the only representative of $j + c_2$ modulo $k$ in $[k]$. For $i,i' \in [r]$ and $c_1,c_2,c'_1,c'_2 \in [k]$, we see that
\begin{align*}
\braket{\phi^v_{i,(c_1,c_2)}|\phi^v_{i',(c'_1,c'_2)}} &= \frac{1}{k} \sum_{j=1}^k \zeta^{(c'_1-c_1)j} \braket{\psi^v_{i, j \oplus c_2} | \psi^v_{i', j \oplus c'_2}}\\
													&= \delta_{i,i'} \delta_{c_2,c'_2} \frac{1}{k} \sum_{j=1}^k \left(\zeta^{(c_1'-c_1)}\right)^{j}\\
													&= \delta_{i,i'} \delta_{c_1,c_1'} \delta_{c_2,c_2'}
\end{align*}
Where the last equality holds because $\zeta^{(c_1'-c_1)}$ is a $k$-th root of unity that is different from one if $c_1 \neq c_1'$. It follows that, for a fixed vertex $v$, the $\ket{\phi^v_{i, (c_1,c_2)}}$ form an orthonormal basis of $\mathbb{C}^{rk^2}$. For every $c_1,c_2 \in [k]$, taking $E^v_{(c_1,c_2)}$ to be the projector on the $r$-dimensional subspace spanned by the $\ket{\phi^v_{i, (c_1,c_2)}}$, we see that for every $v$, these form a projective measurement on $\mathbb{C}^{rk^2}$. Also, we see that for $(u,v) \in V(G)$, for any choice of $i,i' \in [r]$ and for every $c_1,c_2 \in [k]$, by the correctness of the clump representation:
\begin{align*}
\braket{\phi^u_{i,(c_1,c_2)} | \phi^v_{i',(c_1,c_2)}} &= \frac{1}{k} \sum_{j=1}^k \braket{ \psi^u_{i, c_2\oplus j} | \psi^v_{i', c_2 \oplus j}}\\
												 &= 0
\end{align*}
So that $E^u_{(c_1,c_2)} E^v_{(c_1,c_2)} = 0$. The conclusion follows.
\end{proof}

We now put the above result to use to construct the promised graph $G_{21}$. This graph is obtained as the orthogonality graph of the 21 $(2,2)-$clumps listed in appendix A, meaning that two vertices are adjacent if and only if the corresponding clumps are orthogonal. $G_{21}$ is depicted in figure \ref{G21}. The clumps were obtained using a computer by considering the orthogonality graph of all $(2,2)-$clumps with corresponding unit vectors having components in \{$-1$, $0$, $1$\} (pre-normalization) and by looking for a suitable vertex-critical subgraph. \\

By design, we have that $\xi^{(2)}_c(G_{21}) = 2$, and therefore $\chi^{(2)}_q(G_{21}) \leq 4$, by theorem \ref{clumps}. We see that \{1, 3, 10, 16\} forms a clique of $G_{21}$, so that $\omega(G_{21}) \geq 4$: proposition \ref{xisdp} then implies that $\omega(G_{21}) = \chi_q(G_{21}) = \chi^{(2)}_q(G_{21}) = 4$. It is quite easy to show that $\chi(G_{21}) = 5$ using, for example, the standard backtracking algorithm, and it even holds that $\xi(G_{21}) > 4$, which implies that $\xi(G_{21}) = \chi^{(1)}_q(G_{21}) = 5$ and hence proves theorem \ref{separation21}. The description of our computer-assisted proof of this fact is delegated to the next subsection. 

\begin{figure}
\caption{A visual representation of the graph $G_{21}$. This graph contains 21 vertices and 72 edges. The graph6 representation (\cite{graph6}) of $G_{21}$ is \; \; \texttt{TX\_ac\textasciitilde QhaBO\_TDaO@dDewW\_gCd?WWI\_c[?lg} \; \; .}
\label{G21}
\centering
\includegraphics{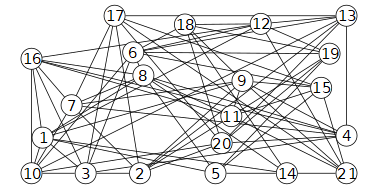}
\end{figure}

\subsection{An algorithm for attempting to prove that a given graph does not admit a $k$-dimensional orthogonal representation}
We now turn to the problem of lower bounding the orthogonal rank of a given graph, with the end goal of showing that $\xi(G_{21}) > 4$. Unlike the case of the chromatic number, which, despite being NP-hard to compute in general, can be computed reasonably comfortably in practice for graphs on up to a few dozen vertices, we know of no practically usable and provably correct algorithm for computing the orthogonal rank of a given graph $G$ of any size except in special cases, like when it so happens that $\ceil{\xi_{SDP}(G)} = \chi(G)$, for example, in which case propositions \ref{sandwich} and \ref{xisdp} imply that $\xi(G) = \chi(G)$. Since $\omega(G_{21}) = \chi_{qc}(G_{21}) = 4$, it follows from this last proposition that $\xi_{SDP}(G_{21}) = 4$, so this not the case here. From the point of view of complexity theory, Briët, Buhrman, Leung, Piovesan and Speelman (\cite{briet}) showed that the problem of determining whether a given graph satisfies $\xi(G) \leq k$ is NP-hard for any fixed $k \geq 3$ (though the bulk of their proof is the $k=3$ case, which can also be seen to follow from lemma 7 of \cite{mancinska}), and the results of Canny (\cite{canny}) imply that this problem is contained in the complexity class PSPACE. It should be mentioned that the problem of looking for a $k$-dimensional orthogonal representation of a given graph can be tackled in practice using numerical root-finding methods, which will generally be reasonably successful at finding one if one exists provided that the graph in question is not too large. Of course, their failure at finding one, while strong evidence that none exists, does not provide an ironclad proof of this, and this is what we are after here.  \\

The problem of determining whether a given graph admits a three-dimensional representation satisfying certain additional properties has received a fair bit of attention in the past, being a subproblem in current approaches in the Kochen-Specker literature, such as those of Arends, Ouaknine and Wampler (\cite{arends}), Uijlen and Westerbaan (\cite{uijlen}), Li, Bright and Ganesh (\cite{li}) and Kirchweger, Peitl and Szeider (\cite{kirchweger}). The purpose of this line of work is to prove lower bounds on the size of any graph that has an orthogonal representation in $\mathbb{R}^3$ but is not 010-colourable (the precise definition of which is not important for our purposes). Though they did not state their results in these terms, the work of Kochen and Specker (\cite{kochenspecker}) can be seen to give the first proof that such a graph exists. Suppose that a given graph has this property. If two vectors in the corresponding orthogonal representation were collinear, identifying the vertices corresponding to these vectors would result in a smaller graph that also has an orthogonal representation in $\mathbb{R}^3$ and is also not 010-colourable: therefore, for the purpose of proving lower bounds, it is permissible to assume that no two vectors in the orthogonal representation are collinear. In the Kochen-Specker literature, a graph is called \textit{embeddable} if it admits such an orthogonal representation. The approach taken in the previously cited works to show that no embeddable and non-010-colourable graph on a certain number of vertices or less exists is analogous to the proof technique that was employed in the first part of this paper: a collection of non-010-colourable graphs is enumerated such that if a counterexample existed, this collection would contain one, and an algorithm due to \cite{uijlen} is employed to prove that none of the graphs in the collection is embeddable. Regrettably, this algorithm appears to be tailor-made for the three-dimensional case and crucially requires the hypothesis that no two vectors in the orthogonal representation be collinear. The requirement that the vectors in the representation must be real is less critical and can be lifted, but doing so will likely increase the complexity of the algorithm.  \\

Fortunately, there is something in this line of work that can be generalised to yield a method of proof for showing that a given graph does not admit a $k$-dimensional orthogonal representation for any choice of $k \geq 3$, namely, the square-free criterion of \cite{arends}, which states that an embeddable graph must be square-free, i.e. not have the complete bipartite graph $K_{2,2}$ as a subgraph. We begin by providing a generalisation of sorts of this, with the original criterion being recovered as the $k=3$ case. Given a graph $G$ and $S \subseteq V(G)$, we define a graph operation called \textit{set identification} in the following way: the result is a supergraph $G'$ of $G$ with the same vertex set such that, for vertices $u,v \in V(G)$ with $u \in S$ and $v \notin S$, the edge $(u,v)$ is added to $G'$ if, for every $w \in S \backslash \{u\}$, it holds that $(v,w) \in E(G)$. We show: 
\smallskip
\begin{theorem} \label{expansion}
Take a graph $G$, $k \geq 3$, and suppose that $v_1,...,v_{k+1}$ are distinct vertices of $G$ such that the corresponding induced subgraph is a supergraph of the complete bipartite graph $K_{k-1,2}$, meaning that for every $i \in [k-1]$ and every $j \in \{k, k+1\}$, it holds that $(v_i, v_j) \in E(G)$. $G$ admits a $k$-dimensional orthogonal representation if and only if one of the following graphs does:
\begin{enumerate}
\item The graph obtained by identifying the vertices $v_k$ and $v_{k+1}$ in $G$. 
\item For every two distinct vertices $u,w \in \{v_1,...,v_{k-1}\}$, the graph obtained by identifying the vertices $u$ and $w$ in $G$. 
\item For every $S \subseteq \{v_1,...,v_{k-1}\}$ with $|S| \geq 3$, the graph obtained by performing set identification with respect to $S$ on $G$. 
\end{enumerate}
Also, a graph coming from one of point 1 and 2 can be dropped if the vertices that are to be identified are adjacent in $G$. Similarly, a graph coming from point 3 corresponding to a set $S$ can be dropped if there is some vertex in $S$ that is adjacent to all the other vertices in $S$.
\end{theorem}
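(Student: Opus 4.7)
The plan is to analyze the linear-dependence structure of the vectors assigned to $v_1, \ldots, v_{k-1}$ in any $k$-dimensional orthogonal representation of $G$. Suppose such a representation $\{\ket{\psi_v}\}$ exists. Because each $v_i$ with $i \in [k-1]$ is adjacent to both $v_k$ and $v_{k+1}$, both $\ket{\psi_{v_k}}$ and $\ket{\psi_{v_{k+1}}}$ lie in the orthogonal complement of $V := \mathrm{span}(\ket{\psi_{v_1}}, \ldots, \ket{\psi_{v_{k-1}}})$, which has dimension $k - \dim V$. The forward direction of the theorem is obtained by splitting on $\dim V$.

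If $\dim V = k-1$, then the orthogonal complement of $V$ is one-dimensional and $\ket{\psi_{v_k}}, \ket{\psi_{v_{k+1}}}$ are necessarily collinear; after absorbing a unit phase into one of them they coincide, which produces an orthogonal representation of the graph from point 1 by assigning this common vector to the merged vertex. Since the two vectors are now equal, $v_k$ and $v_{k+1}$ cannot have been adjacent in $G$, confirming the drop condition. If $\dim V \leq k-2$, I pick a minimal subset $S \subseteq \{v_1, \ldots, v_{k-1}\}$ admitting a nontrivial linear relation $\sum_{w \in S} \alpha_w \ket{\psi_w} = 0$ with every $\alpha_w \neq 0$. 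When $|S| = 2$ the two vectors in $S$ are collinear and, after a phase adjustment, equal, yielding an orthogonal representation of a graph from point 2 (and again forcing non-adjacency in $G$). When $|S| \geq 3$, I claim the original representation already witnesses that the graph $G'_S$ obtained by set identification with respect to $S$ admits one; only newly added edges need checking. For an added edge $(u,v)$ with $u \in S$, $v \notin S$ and $v$ adjacent in $G$ to every vertex of $S \setminus \{u\}$, the relation can be rewritten as $\ket{\psi_u} = -\alpha_u^{-1}\sum_{w \in S \setminus \{u\}} \alpha_w \ket{\psi_w}$, whence $\braket{\psi_v | \psi_u} = -\alpha_u^{-1}\sum_{w \in S \setminus \{u\}} \alpha_w \braket{\psi_v | \psi_w} = 0$, as required. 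The drop condition for point 3 is then established by noting that if some $u \in S$ were adjacent in $G$ to every other vertex of $S$, then $\ket{\psi_u}$ would be orthogonal to $\mathrm{span}\{\ket{\psi_w} : w \in S \setminus \{u\}\}$ while simultaneously lying inside it by the linear relation, forcing $\ket{\psi_u} = 0$, a contradiction.

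For the converse, a $k$-dimensional orthogonal representation of a graph arising from point 1 or point 2 extends to one of $G$ by assigning the vector of the merged vertex to both original vertices; the only orthogonality that might fail is on the edge between the two identified vertices, which is precisely why those cases can be dropped when the vertices are adjacent in $G$. A representation of a graph arising from point 3 transfers directly to $G$ since $G'_S$ is a supergraph of $G$ on the same vertex set, so any valid orthogonal representation of $G'_S$ is automatically one for $G$.

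The main difficulty I expect is verifying that the case analysis in the forward direction is genuinely exhaustive and that the minimal dependence $S$ always falls within $\{v_1, \ldots, v_{k-1}\}$ and hence produces one of the listed modified graphs rather than something outside the family. Once the minimal dependence is in hand, the set-identification argument is short; the delicate bookkeeping lies in ensuring that each drop condition exactly matches a configuration that the forward direction has already ruled out, so that no case contributes a vacuous biconditional.
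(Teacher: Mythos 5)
Your proposal is correct and follows essentially the same approach as the paper: case analysis on whether $\ket{\psi_{v_1}},\ldots,\ket{\psi_{v_{k-1}}}$ are linearly independent, with the dependent case subdivided by the support size of the linear relation, and the backward direction handled exactly as you describe. The worry you raise at the end is unfounded: the relation is by construction among vectors indexed by $\{v_1,\ldots,v_{k-1}\}$, so its support $S$ automatically lands in the right set, and the dichotomy $\dim V = k-1$ versus $\dim V \le k-2$ is exhaustive; also, the minimality you impose on $S$ is never actually used and can be dropped in favour of simply taking the support of any nontrivial relation, which is what the paper does.
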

\begin{proof}
The backward direction is straightforward: if one of the graphs $G'$ coming from point 3 has a $k$-dimensional orthogonal representation, so does $G$, since it is a subgraph of $G'$. Similarly, a $k$-dimensional orthogonal representation for a graph coming from one of point 1 or 2 lifts to a $k$-dimensional orthogonal representation of $G$ in the obvious way. \\

For the forward direction, suppose that $G$ has an orthogonal representation $\{\ket{\psi_v}\}_{v \in V(G)}$ in $\mathbb{C}^k$. We proceed by case analysis. One of the following must be the case:
\begin{itemize}
\item $\ket{\psi_{v_1}}, ..., \ket{\psi_{v_{k-1}}}$ are linearly independent. Letting $V$ be the $(k-1)$-dimensional subspace of $\mathbb{C}^{k}$ spanned by these vectors, by basic linear algebra, we have that $V^\bot$ is one-dimensional. Since, by hypothesis, $\ket{\psi_{v_k}}, \ket{\psi_{v_{k+1}}} \in V^\bot$, it follows that these two vectors are collinear. This can be seen to mean that identifying $v_k$ and $v_{k+1}$ in $G$ results in a graph that also has a $k$-dimensional orthogonal representation. Clearly, if $v_k$ and $v_{k+1}$ are adjacent in $G$, this case is impossible.
\item $\ket{\psi_{v_1}}, ..., \ket{\psi_{v_{k-1}}}$ are linearly dependent, so that there exist scalars $\lambda_1, \lambda_2, ..., \lambda_{k-1} \in \mathbb{C}$, not all zero, such that:
\[\sum_{i=1}^{k-1} \lambda_i \ket{\psi_{v_i}} = 0\]
\smallskip
Defining the set $S$ by:
\[S = \{v_i \mid i \in [k-1], \lambda_i \neq 0\}\]
It must be the case that $|S| \geq 2$ because the $\ket{\psi_{v_i}}$ are nonzero. The following cases are possible:
\begin{itemize}
\item $|S| = 2$: letting $S = \{u,w\}$, this case is much the same as the previous one. If $u$ and $w$ are adjacent, this case is impossible and can be omitted, and if not, $\ket{\psi_u}$ and $\ket{\psi_w}$ are collinear, so that by identifying $u$ and $w$ in $G$, we get a graph that also has a $k$-dimensional orthogonal representation. 
\item $|S| \geq 3$: in this case, for every $u \in S$, we see that $\ket{\psi_u}$ can be written as a linear combination of the vectors that were assigned to the other vertices in $S$. If, for some $u \in S$, $u$ is adjacent to all the other vertices in $S$, this case is impossible. If not, pick some $u \in S$. If $v \notin S$ is such that for every $w \in S \backslash \{u\}$, it holds that $(v,w) \in E(G)$, and therefore that $\braket{\psi_v | \psi_w} = 0$, it follows that $\braket{\psi_v|\psi_u} = 0$, by the linearity of the inner product, and so the orthogonal representation remains valid if the edge $(u,v)$ is added to the graph. Hence, applying set identification to $G$ with respect to $S$ results in a graph that also has a $k$-dimensional orthogonal representation. 
\end{itemize}
\end{itemize}
We see that the graphs listed in the statement of the theorem cover all of the above cases. 
\end{proof}

The way the above result can be harnessed to attempt to prove that a given graph $G$ does not admit a $k$-dimensional orthogonal representation is reasonably straightforward. We begin by checking if it is the case that $\xi_{SDP}(G) > k$: if this is so, proposition \ref{xisdp} implies that we are done. If not, we look for a copy of $K_{k-1,2}$ inside $G$ such that the set identifications given by the statement of theorem \ref{expansion} all result in proper supergraphs of $G$. If no such copy exists, no further headway can be made and failure is declared: otherwise, this procedure is applied recursively to the resulting graphs, which are all either smaller than $G$ or contain more edges than it, and, so we hope, are closer to having a $\xi_{SDP}$ value strictly greater than $k$. If the procedure could prove that none of the resulting graphs admits a $k$-dimensional orthogonal representation, we can conclude that neither does $G$. \\

Applied to $G_{21}$ with $k=4$, this algorithm took about three seconds on our hardware to show that $G_{21}$ does not admit a four-dimensional orthogonal representation, with $\xi_{SDP}$ being evaluated on 128 graphs. In our implementation, a copy of $K_{k-1,2}$ inside the graph is picked at random until one is found that is suitable. This choice is surely not optimal, and there are a few variations on theorem \ref{expansion} based on other subgraphs that could be considered. Since the algorithm we described turned out to be perfectly serviceable for our purposes, such considerations are left for future work. It is reasonable that the algorithm worked in this case because $k$ is small, so that the branching factor of the algorithm is not too large (being at most 5), and because it turns out that $G_{21}$ contains a number of distinct copies of $K_{3,2}$, so that the algorithm never got stuck on a graph that was maximal with respect to the operation described by theorem \ref{expansion}. However, it should not be very surprising that the above procedure will not systematically work in all cases, even if it does hold that $G$ does not admit a $k$-dimensional orthogonal representation. Indeed, looking at some of the nonembeddable square-free graphs that were unearthed by works in the Kochen-Specker literature, we could find some for which our algorithm failed at proving the nonexistence of a 3-dimensional orthogonal representation despite numerical evidence strongly suggesting that none exists. \\

Finally, we note that it so happens that for $G_{21}$, $\chi^{(1)}_q$ and $\xi$ coincide, so that tightly lower bounding the latter allowed us to tightly lower bound the former. In general, as was mentioned in the preliminaries, it may happen that these two parameters disagree, in which case our algorithm would be of no direct use for showing a tight lower bound on $\chi^{(1)}_q$. Fortunately, \cite{scarpa} showed that for all graphs $G$ and all values of $k$, $\chi^{(1)}_q(G) \leq k$ if and only if $\xi(G \square K_k) = k$, where $\square$ stands for the Cartesian product of graphs and $K_k$ stands for the complete graph on $k$ vertices. This means that the problem of showing a tight lower bound on the rank-one chromatic number of a given graph can be reduced to the problem of proving that another graph does not admit an orthogonal representation in a given dimension, which could then be tackled using our approach. 

\section{Conclusion and open problems}
In conclusion, we have shown that for every graph $G$ such that either $|V(G)| \leq 13$ or $|V(G)| = 14$ and $\chi_q(G) = 3$, it holds that $\chi(G) = \chi_q(G)$, thereby proving a conjecture of \cite{mancinska} and therefore solving a longstanding open problem about the quantum chromatic number. Furthermore, making use of our notion of a vector clump, we have given a small graph $G_{21}$ on 21 vertices such that $\chi_q(G_{21}) = \chi^{(2)}_q(G_{21}) = 4$ and $\xi(G_{21}) = \chi^{(1)}_q(G_{21}) = \chi(G_{21}) = 5$, thereby giving the smallest separation known between the parameters $\chi_q$ and $\chi^{(1)}_q$, as well as the first separation between $\chi^{(1)}_q$ and  $\chi^{(2)}_q$. \\

Our work suggests a number of avenues for future research, most prominently:
\begin{itemize}
\item Although it would be very surprising if this were not the case, at the time of writing, no proof is known that the $\chi^{(r)}_q$ are all distinct graph parameters, and it could be interesting to look for one. In view of theorem \ref{separation21}, to show this, it would suffice to come up with a graph transformation $G \mapsto G'$ such that $\chi^{(r)}_q(G) = \chi^{(r+1)}_q(G')$ for all $r$, for example. 
\item It would be very interesting to try to use our enumerative approach to gain a better understanding of the quantum chromatic numbers of small graphs. For example, it is reasonable to wonder whether all graphs on 14 vertices exhibiting a separation between the quantum and classical chromatic numbers are subgraphs of $G_{14}$. This could maybe be tackled using our methods, but at the cost of several decades of CPU time at minimum. Alternatively, it would be interesting to look for graphs on 20 vertices or less exhibiting a separation between $\chi_q$ and  $\chi^{(1)}_q$, or, more ambitiously, to try to determine the smallest such graph. Unless, in an unexpected turn of events, such a separation was found on 14 or 15 vertices, it seems unlikely that an enumerative approach like the one that was used in this paper would be practically feasible to attack this last problem.
\item The fact that no general practical algorithm is known for computing the orthogonal rank of a given graph $G$, even if $G$ is small, is a rather problematic state of affairs. It is ironic that although, theoretically, $\chi_q$ is uncomputable and $\xi$ is, in practice and for small graphs, the roles are reversed, with $\chi_q$ being well approachable using semidefinite hierarchies while there are graphs on a dozen vertices whose orthogonal ranks are not rigorously determined. The algorithm we presented in subsection 4.2 did succeed in showing that $\xi(G_{21}) > 4$, but there are graphs for which we will not be so lucky. It seems likely that in future computational attempts to look for small separations between $\chi_q$ and $\chi^{(1)}_q$, a surefire algorithm for computing the orthogonal rank of a graph will be required, although a numerical approach will likely suffice for exploration purposes. 
\end{itemize}
\smallskip
We end by mentioning what we think is an important open problem in the theory of the quantum chromatic number. We know from the work of \cite{BCT} that not only are the classical and quantum chromatic numbers distinct, but that the difference between the two can be arbitrarily large. It would be very interesting to try to strengthen their results by showing that there is some fixed $k$ such that, for every $n \in \mathbb{N}$, there exists a graph $G$ with $\chi_q(G) = k$ and $\chi(G) \geq n$. It should be noted that this cannot be done by only considering quantum colourings of a given fixed rank (as it is possible to upper bound the chromatic number of a graph only knowing its rank-$r$ chromatic number), and except for the construction based on vector clumps that was given in this paper, all known approaches for generating separations between $\chi_q$ and $\chi^{(1)}_q$ rely on variations on the classical 3-SAT to 3-COL reduction and therefore yield graphs with chromatic number at most 4. In particular, in order to establish this, one would have to either put our construction to use or to look for a new way to construct quantum colourings.

\section{Acknowledgments}
Financial support for this work was provided by the Canadian Natural Sciences and Engineering Research Council (NSERC) as well as by the Fonds de recherche du Québec – Nature et technologies (FRQNT). This support is gratefully acknowledged. We are grateful to Brendan McKay for helpful guidance regarding the use of \texttt{geng}. We further thank Harry Buhrman, David Roberson and Ronald de Wolf for useful exchanges about quantum graph theory, Travis Russell for helpful discussions about his hierarchy and William Slofstra and Kieran Mastel for letting us know about the work of \cite{harris}. We also thank Julien Codsi for discussions about the Kochen-Specker literature and for proofreading parts of this manuscript. Finally, we thank our advisors Gilles Brassard and Frédéric Dupuis for their support.

\appendix
\section{The clumps corresponding to the graph $G_{21}$}
This appendix gives the $(2,2)$-clumps of which $G_{21}$ is the orthogonality graph. As in matrix notation, when specifying a clump $\{\ket{\psi_{i,j}}\}$, $i$ runs from top to bottom and $j$ runs from left to right. Note that these clumps are also contained in the \texttt{clumps.jld2} file of \url{https://github.com/lalondeo/QuantumColorings}. 

\begin{multicols}{2}
\begin{enumerate}
\item \begin{align*}
&\frac{1}{\sqrt{2}}  \left[
\begin{array}{cccc}
1 & 0 & -1 & 0 \\
\end{array}
\right] , 
\frac{1}{\sqrt{2}} \left[
\begin{array}{cccc}
1 & 0 & 1 & 0 \\
\end{array}
\right]\\
&\frac{1}{\sqrt{2}} \left[
\begin{array}{cccc}
0 & 1 & 0 & 1 \\
\end{array}
\right] , 
\frac{1}{\sqrt{2}} \left[
\begin{array}{cccc}
0 & 1 & 0 & -1 \\
\end{array}
\right]
\end{align*}
\item \begin{align*}
&\frac{1}{\sqrt{2}} \left[
\begin{array}{cccc}
0 & 0 & 1 & -1 \\
\end{array}
\right] , 
\frac{1}{\sqrt{2}} \left[
\begin{array}{cccc}
1 & -1 & 0 & 0 \\
\end{array}
\right]\\
&\frac{1}{\sqrt{2}} \left[
\begin{array}{cccc}
1 & 1 & 0 & 0 \\
\end{array}
\right] , 
\frac{1}{\sqrt{2}} \left[
\begin{array}{cccc}
0 & 0 & -1 & -1 \\
\end{array}
\right]
\end{align*}
\item \begin{align*}
&\frac{1}{\sqrt{2}} \left[
\begin{array}{cccc}
1 & 0 & 1 & 0 \\
\end{array}
\right] , 
\frac{1}{\sqrt{2}} \left[
\begin{array}{cccc}
-1 & 0 & 1 & 0 \\
\end{array}
\right]\\
&\frac{1}{\sqrt{2}} \left[
\begin{array}{cccc}
0 & 1 & 0 & 1 \\
\end{array}
\right] , 
\frac{1}{\sqrt{2}} \left[
\begin{array}{cccc}
0 & -1 & 0 & 1 \\
\end{array}
\right]
\end{align*}
\item \begin{align*}
&\frac{1}{\sqrt{2}} \left[
\begin{array}{cccc}
1 & 0 & -1 & 0 \\
\end{array}
\right] , 
\frac{1}{\sqrt{2}} \left[
\begin{array}{cccc}
-1 & 0 & -1 & 0 \\
\end{array}
\right]\\
&\frac{1}{\sqrt{2}} \left[
\begin{array}{cccc}
0 & 1 & 0 & 1 \\
\end{array}
\right] , 
\frac{1}{\sqrt{2}} \left[
\begin{array}{cccc}
0 & 1 & 0 & -1 \\
\end{array}
\right]
\end{align*}
\item \begin{align*}
&\frac{1}{\sqrt{2}} \left[
\begin{array}{cccc}
1 & -1 & 0 & 0 \\
\end{array}
\right] , 
\frac{1}{\sqrt{2}} \left[
\begin{array}{cccc}
0 & 0 & -1 & -1 \\
\end{array}
\right]\\
&\frac{1}{\sqrt{2}} \left[
\begin{array}{cccc}
1 & 1 & 0 & 0 \\
\end{array}
\right] , 
\frac{1}{\sqrt{2}} \left[
\begin{array}{cccc}
0 & 0 & -1 & 1 \\
\end{array}
\right]
\end{align*}
\item \begin{align*}
&\frac{1}{\sqrt{2}} \left[
\begin{array}{cccc}
0 & 1 & 1 & 0 \\
\end{array}
\right] , 
\frac{1}{\sqrt{2}} \left[
\begin{array}{cccc}
0 & 1 & -1 & 0 \\
\end{array}
\right]\\
&\frac{1}{\sqrt{2}} \left[
\begin{array}{cccc}
1 & 0 & 0 & 1 \\
\end{array}
\right] , 
\frac{1}{\sqrt{2}} \left[
\begin{array}{cccc}
1 & 0 & 0 & -1 \\
\end{array}
\right]
\end{align*}
\item \begin{align*}
&\frac{1}{\sqrt{2}} \left[
\begin{array}{cccc}
1 & 0 & -1 & 0 \\
\end{array}
\right] , 
\frac{1}{\sqrt{2}} \left[
\begin{array}{cccc}
1 & 0 & 1 & 0 \\
\end{array}
\right]\\
&\frac{1}{\sqrt{2}} \left[
\begin{array}{cccc}
0 & 1 & 0 & 1 \\
\end{array}
\right] , 
\frac{1}{\sqrt{2}} \left[
\begin{array}{cccc}
0 & -1 & 0 & 1 \\
\end{array}
\right]
\end{align*}
\item \begin{align*}
&\frac{1}{2} \left[
\begin{array}{cccc}
1 & 1 & 1 & -1 \\
\end{array}
\right] , 
\frac{1}{2} \left[
\begin{array}{cccc}
-1 & -1 & 1 & -1 \\
\end{array}
\right]\\
&\frac{1}{2} \left[
\begin{array}{cccc}
1 & -1 & 1 & 1 \\
\end{array}
\right] , 
\frac{1}{2} \left[
\begin{array}{cccc}
-1 & 1 & 1 & 1 \\
\end{array}
\right]
\end{align*}
\item \begin{align*}
&\frac{1}{2} \left[
\begin{array}{cccc}
1 & 1 & 1 & -1 \\
\end{array}
\right] , 
\frac{1}{2} \left[
\begin{array}{cccc}
-1 & 1 & 1 & 1 \\
\end{array}
\right]\\
&\frac{1}{2} \left[
\begin{array}{cccc}
1 & -1 & 1 & 1 \\
\end{array}
\right] , 
\frac{1}{2} \left[
\begin{array}{cccc}
1 & 1 & -1 & 1 \\
\end{array}
\right]
\end{align*}
\item \begin{align*}
&\frac{1}{\sqrt{2}} \left[
\begin{array}{cccc}
0 & 1 & 0 & -1 \\
\end{array}
\right] , 
\frac{1}{\sqrt{2}} \left[
\begin{array}{cccc}
0 & -1 & 0 & -1 \\
\end{array}
\right]\\
&\frac{1}{\sqrt{2}} \left[
\begin{array}{cccc}
1 & 0 & -1 & 0 \\
\end{array}
\right] , 
\frac{1}{\sqrt{2}} \left[
\begin{array}{cccc}
-1 & 0 & -1 & 0 \\
\end{array}
\right]
\end{align*}
\item \begin{align*}
&\frac{1}{\sqrt{2}} \left[
\begin{array}{cccc}
1 & 0 & -1 & 0 \\
\end{array}
\right] , 
\frac{1}{\sqrt{2}} \left[
\begin{array}{cccc}
0 & -1 & 0 & 1 \\
\end{array}
\right]\\
&\frac{1}{\sqrt{2}} \left[
\begin{array}{cccc}
0 & 1 & 0 & 1 \\
\end{array}
\right] , 
\frac{1}{\sqrt{2}} \left[
\begin{array}{cccc}
1 & 0 & 1 & 0 \\
\end{array}
\right]
\end{align*}
\item \begin{align*}
&\frac{1}{2} \left[
\begin{array}{cccc}
1 & -1 & -1 & 1 \\
\end{array}
\right] , 
\frac{1}{2} \left[
\begin{array}{cccc}
-1 & 1 & -1 & 1 \\
\end{array}
\right]\\
&\frac{1}{2} \left[
\begin{array}{cccc}
1 & 1 & 1 & 1 \\
\end{array}
\right] , 
\frac{1}{2} \left[
\begin{array}{cccc}
-1 & -1 & 1 & 1 \\
\end{array}
\right]
\end{align*}
\item \begin{align*}
&\frac{1}{2} \left[
\begin{array}{cccc}
1 & 1 & -1 & -1 \\
\end{array}
\right] , 
\frac{1}{2} \left[
\begin{array}{cccc}
1 & 1 & 1 & 1 \\
\end{array}
\right]\\
&\frac{1}{2} \left[
\begin{array}{cccc}
1 & -1 & 1 & -1 \\
\end{array}
\right] , 
\frac{1}{2} \left[
\begin{array}{cccc}
-1 & 1 & 1 & -1 \\
\end{array}
\right]
\end{align*}
\item \begin{align*}
&\frac{1}{\sqrt{2}} \left[
\begin{array}{cccc}
0 & 0 & 1 & -1 \\
\end{array}
\right] , 
\frac{1}{\sqrt{2}} \left[
\begin{array}{cccc}
1 & 1 & 0 & 0 \\
\end{array}
\right]\\
&\frac{1}{\sqrt{2}} \left[
\begin{array}{cccc}
0 & 0 & 1 & 1 \\
\end{array}
\right] , 
\frac{1}{\sqrt{2}} \left[
\begin{array}{cccc}
1 & -1 & 0 & 0 \\
\end{array}
\right]
\end{align*}
\item \begin{align*}
& \left[
\begin{array}{cccc}
0 & 0 & 1 & 0 \\
\end{array}
\right] , 
 \left[
\begin{array}{cccc}
0 & -1 & 0 & 0 \\
\end{array}
\right]\\
&\left[
\begin{array}{cccc}
0 & 0 & 0 & 1 \\
\end{array}
\right] , 
\left[
\begin{array}{cccc}
-1 & 0 & 0 & 0 \\
\end{array}
\right]
\end{align*}
\item \begin{align*}
&\frac{1}{\sqrt{2}} \left[
\begin{array}{cccc}
0 & 1 & 0 & -1 \\
\end{array}
\right] , 
\frac{1}{\sqrt{2}} \left[
\begin{array}{cccc}
0 & 1 & 0 & 1 \\
\end{array}
\right]\\
&\frac{1}{\sqrt{2}} \left[
\begin{array}{cccc}
1 & 0 & 1 & 0 \\
\end{array}
\right] , 
\frac{1}{\sqrt{2}} \left[
\begin{array}{cccc}
1 & 0 & -1 & 0 \\
\end{array}
\right]
\end{align*}
\item \begin{align*}
&\frac{1}{2} \left[
\begin{array}{cccc}
1 & -1 & -1 & -1 \\
\end{array}
\right] , 
\frac{1}{2} \left[
\begin{array}{cccc}
-1 & -1 & -1 & 1 \\
\end{array}
\right]\\
&\frac{1}{2} \left[
\begin{array}{cccc}
1 & -1 & 1 & 1 \\
\end{array}
\right] , 
\frac{1}{2} \left[
\begin{array}{cccc}
1 & 1 & -1 & 1 \\
\end{array}
\right]
\end{align*}
\item \begin{align*}
&\frac{1}{2} \left[
\begin{array}{cccc}
1 &  1 & -1 & -1 \\
\end{array}
\right] , 
\frac{1}{2} \left[
\begin{array}{cccc}
1 & -1 & -1 & 1 \\
\end{array}
\right]\\
&\frac{1}{2} \left[
\begin{array}{cccc}
1 & -1 & 1 & -1 \\
\end{array}
\right] , 
\frac{1}{2} \left[
\begin{array}{cccc}
-1 & -1 & -1 & -1 \\
\end{array}
\right]
\end{align*}
\item \begin{align*}
&\frac{1}{2} \left[
\begin{array}{cccc}
1 & 1 & -1 & -1 \\
\end{array}
\right] , 
\frac{1}{2} \left[
\begin{array}{cccc}
-1 & 1 & 1 & -1 \\
\end{array}
\right]\\
&\frac{1}{2} \left[
\begin{array}{cccc}
1 & -1 & 1 & -1 \\
\end{array}
\right] , 
\frac{1}{2} \left[
\begin{array}{cccc}
1 & 1 & 1 & 1 \\
\end{array}
\right]
\end{align*}
\item \begin{align*}
&\frac{1}{\sqrt{2}} \left[
\begin{array}{cccc}
0 & 1 & 1 & 0 \\
\end{array}
\right] , 
\frac{1}{\sqrt{2}} \left[
\begin{array}{cccc}
-1 & 0 & 0 & 1 \\
\end{array}
\right]\\
&\frac{1}{\sqrt{2}} \left[
\begin{array}{cccc}
1 & 0 & 0 & 1 \\
\end{array}
\right] , 
\frac{1}{\sqrt{2}} \left[
\begin{array}{cccc}
0 & -1 & 1 & 0 \\
\end{array}
\right]
\end{align*}
\item \begin{align*}
&\left[
\begin{array}{cccc}
1 & 0 & 0 & 0 \\
\end{array}
\right] , 
\left[
\begin{array}{cccc}
0 & 0 & 0 & -1 \\
\end{array}
\right]\\
&\left[
\begin{array}{cccc}
0 & 1 & 0 & 0 \\
\end{array}
\right] , 
\left[
\begin{array}{cccc}
0 & 0 & -1 & 0 \\
\end{array}
\right]
\end{align*}
\end{enumerate}
\end{multicols}
\end{document}